\def\BibTeX{{\rm B\kern-.05em{\sc i\kern-.025em b}\kern-.08em
    T\kern-.1667em\lower.7ex\hbox{E}\kern-.125emX}}
\definecolor{LightCyan}{rgb}{0.85,0.85,1.0}
\newtheorem{theorem}{Theorem}
\newtheorem{lemma}{Lemma}
\newtheorem{definition}{Definition}
\newtheorem{proposition}{Proposition}
\newtheorem{problem}{Problem}
\let\NAT@parse\undefined
\newcommand\copyrighttext{%
  \footnotesize \textcopyright This paper has been accepted for publication in the IEEE Conference on Decision and Control. Please cite the paper as: E. Sebasti\'{a}n, E. Montijano, and S. Sag\"{u}\'{e}s,``All-in-one: Certifiable Optimal Distributed Kalman Filter under Unknown Correlations'', IEEE Conference on Decision and Control (CDC), 6578-6583, 2021.}
\newcommand\copyrightnotice{%
\begin{tikzpicture}[remember picture,overlay]
\node[anchor=south,yshift=10pt] at (current page.south) {\fbox{\parbox{\dimexpr\textwidth-\fboxsep-\fboxrule\relax}{\copyrighttext}}};
\end{tikzpicture}%
}
\title{\LARGE\bf All-in-one: Certifiable Optimal Distributed Kalman Filter under Unknown Correlations}
\author{\centering Eduardo Sebasti\'{a}n, Eduardo Montijano and Carlos Sagü\'{e}s
\thanks{E. Sebasti\'{a}n, E. Montijano and C. Sagü\'{e}s are associated with the Instituto de Investigaci\'on en Ingenier\'ia de Arag\'on, Universidad de Zaragoza, Spain 
\texttt{\small \{esebastian, emonti, csagues\}@unizar.es}}
\thanks{This work was supported by ONRG-NICOP-grant N62909-19-1-2027, Spanish Projects PGC2018-098719-B-I00 (MCIU/AEI/FEDER, UE), DGA T04-FSE, and Spanish grant FPU19-05700. We are grateful for this support.}
}
\begin{document}
\maketitle

\copyrightnotice

\begin{abstract}
The optimal fusion of estimates in a Distributed Kalman Filter (DKF) requires tracking of the complete network error covariance, problematic in terms of memory and communication. A scalable alternative is to fuse estimates under unknown correlations, doing the update by solving an optimisation problem. Unfortunately, this problem is NP-hard, forcing relaxations that lose optimality guarantees. Motivated by this, 
we present the first Certifiable Optimal DKF (CO-DKF). Using only information from one-hop neighbours, CO-DKF solves the optimal fusion of estimates under unknown correlations by a particular tight Semidefinite Programming (SDP) relaxation which allows to certify, locally and in real time, if the relaxed solution is the actual optimum. In that case, we prove optimality in the Mean Square Error (MSE) sense. Additionally, we demonstrate the global asymptotic stability of the estimator.
CO-DKF outperforms other state-of-the-art DKF algorithms, specially in sparse, highly noisy setups.  
\end{abstract}


\section{Introduction}\label{sec:intro}

The Kalman Filter (KF)~\cite{Kalman1960Kalman} is a cornerstone in control theory. Its elegance and optimality has motivated its extension to distributed setups, with several promising applications such as cooperative tracking~\cite{Morbidi2012Active}
or monitoring~\cite{Khan2008Monitoring}.
However, the optimal fusion of estimates in DKF requires tracking of the whole network covariance~\cite{Olfati2009DKF}. This is impractical in terms of memory and communication.
The alternative is to consider unknown correlations, solving the estimation fusion sub-optimally or via an NP-hard optimisation problem.
The latest relies on convex relaxations, trading off formal guarantees of finding the original optimum. 
This is where \textit{certification} is of key importance, verifying if the relaxed solution is that of the original problem~\cite{Yang2020Teaser}. 

Exploiting this idea, our main contribution is the Certifiable Optimal Distributed Kalman Filter (CO-DKF), the \textit{first DKF with certifiability guarantees}.  
If the certification is positive, then CO-DKF is \textit{optimal in the Mean Square Error (MSE) sense under unknown correlations}.
Otherwise, it is still globally asymptotically stable.
Besides, nodes can check optimality locally and online, enabling security mechanisms to be exploited in applications where the estimator is embedded in the control structure~\cite{Talebi2019Distributed}.
In addition, CO-DKF is \textit{fully distributed}, has \textit{small communication burden} and works with \textit{heterogeneous} sensor models.


\section{Related work}\label{sec:SA}

The seminal works by Olfati-Saber~\cite{Olfati2005DKF,Olfati2007DKF} opened an era of consensus-based DKFs. The optimal gains in the MSE sense for the DKF proposed in~\cite{Olfati2007DKF} are found in~\cite{Olfati2009DKF}, but it is also proved that their computation is not scalable.
Since then, several works have searched for an Optimal DKF. 
In~\cite{Das2016Consensus} the authors propose a DKF composed by a dynamic consensus and a filtering step, where each agent knows the communication topology and sensor models. Two other DKFs are~\cite{Kamal2013Information,Battistelli2014Consensus}, where the consensus over information and measurements is exploited to improve the performance. The solution in~\cite{Kamal2013Information} converges to the centralised KF but reaching average consensus at each time step. These papers track the network error covariance matrix or do not consider the information in the error covariances to define optimality.

The absence of optimal \textit{scalable} solutions motivated the Diffusion KFs (DfKF). They assume \textit{unknown correlations} so that only local error covariances are leveraged, where the most popular approach is the Covariance Intersection (CI) method~\cite{Julier1997CI}. The work in~\cite{Reinhardt2015CI} shows that CI provides the optimal bound for two nodes, but for a greater number the CI method is generally not optimal, arising a necessity of certification.  
The authors of~\cite{Cattivelli2010Diffusion} proposed the DfKF suggesting a general convex optimisation. The CI method in DfKF was first applied in~\cite{Hu2011Diffusion}; however, arguing on its computational complexity, they study an approximation using only the estimate associated with the smaller error covariance trace. The work in~\cite{Deng2012sequential} proposes a sequential algorithm where each node performs a CI optimisation of two covariances, reducing the computation but increasing the communication burden. Two other CI-based DKFs are~\cite{wang2017convergence,he2018consistent}. The former provides novel proofs of stability for general time-varying and non-detectable scenarios, while the latter proposes an adaptive CI which decreases the conservativeness and preserves consistency. However, they do not guarantee optimality nor certifiability.

Inspired by the DfKF solutions, CO-DKF gathers the benefits of all the aforementioned solutions and adds the \textit{certifiability and optimality} properties, which come from a reformulation of the CI method based on the outer Löwner-John ellipsoid intersection method~\cite{Boyd1994LMI} and a SDP relaxation. Moreover, by proving the equivalence between CO-DKF and the optimal consensus over estimates we can demonstrate global asymptotic stability. An interesting consequence is an improved robustness against high levels of noise in the sensor measurements and sparse networks, validated in simulations that compare CO-DKF to other state-of-the-art filters.


\section{Problem formulation}\label{sec:preliminaries}

The target system is described by linear dynamics\footnote{For simplicity we do not consider external inputs and no matrix multiplying $\mathbf{w}(k)$. However, the results of the paper are directly generalisable.} 
\begin{equation}\label{eq:system}
    \mathbf{x}(k+1) = \mathbf{A}\mathbf{x}(k) + \mathbf{w}(k),
\end{equation}
where $k \in \mathbb{N}_{\ge 0}$ is the discrete time, $\mathbf{x} \in \mathbb{R}^{n}$ is the state, $\mathbf{A} \in \mathbb{R}^{n} \times \mathbb{R}^{n}$ is a matrix comprising the target system dynamics, and $\mathbf{w}(k) \sim \mathcal{N}(\mathbf{0},\mathbf{Q}(k))$ with $\hbox{E}[\mathbf{w}(a)\mathbf{w}^T(b)] := \mathbf{Q}(k) \delta_{ab}$ is the Zero Gaussian Mean (ZGM) state noise, where $\delta_{ab} = 1$ if $a=b$ and $\delta_{ab} = 0$ otherwise.

To track the system, there is a network of sensors with a communication graph $G = (V,E)$. The number of nodes is $N=|V|$. Moreover, $\mathcal{N}_i = \{j|(i,j)\in E \}$ is the set of neighbours of node $i$ and $\mathcal{J}_i = \mathcal{N}_i \cup i$ with $p = |\mathcal{J}_i|$ its cardinality. Each sensor is described by a linear model
\begin{equation}\label{eq:distributedSensorModel}
    \mathbf{z}_i(k) = \mathbf{H}_i\mathbf{x}(k) + \mathbf{v}_i(k).
\end{equation}
Here, $\mathbf{z}_i \in \mathbb{R}^{m_i}$ is the measurement of node $i$, $\mathbf{H}_i \in \mathbb{R}^{m_i} \times \mathbb{R}^{n}$ is the (unbiased) sensor model of node $i$, $m_i$ is the dimension of $\mathbf{z}_i$ and $\mathbf{v}_i(k) \sim \mathcal{N}(\mathbf{0},\mathbf{R}_i(k))$ with $\hbox{E}[\mathbf{v}_i(a)\mathbf{v}_i^T(b)] := \mathbf{R}_i(k) \delta_{ab}$ is the ZGM measurement noise, where $\delta_{ab} = 1$ if $a=b$ and $\delta_{ab} = 0$ otherwise. We consider independent measurements. Notice that by using $m_i$ we are considering heterogeneous and/or incomplete, as long as the system is observable by at least one node. 

The objective of the network is to cooperatively estimate $\mathbf{x}$. To do so, each node runs a local estimation algorithm, composed by a prediction step of the state ($\bar{\mathbf{x}}_i$), and an update of the prediction ($\hat{\mathbf{x}}_i$). 
Associated with the prediction there is an error covariance matrix $\mathbf{P}_i := \hbox{E}\{(\bar{\mathbf{x}}_i-\mathbf{x})(\bar{\mathbf{x}}_i-\mathbf{x})^T\} $. Since this matrix is not known, each node has a predicted ($\bar{\mathbf{P}}_{i}$) and an updated ($\hat{\mathbf{P}}_{i}$) version. Unless it is essential, in the paper we will omit the dependencies with $k$ for clearness.

Given complete knowledge of the network, the optimal fusion of these terms is
\begin{align}
    \bar{\mathbf{S}}^*_{i} &= (\bar{\mathbf{P}}_{i}^*)^{-1} = (\mathbf{1} \otimes \mathbf{I}_{n}) \bar{\mathbf{P}}_{c}^{-1} (\mathbf{1} \otimes \mathbf{I}_{n})^T, \label{eq:optimal1}
    \\
    \bar{\mathbf{x}}_{i}^* &= \bar{\mathbf{P}}^*_{i} (\mathbf{1} \otimes \mathbf{I}_{n}) \bar{\mathbf{P}}_{c}^{-1} [\bar{\mathbf{x}}_1^T, \hdots, \bar{\mathbf{x}}_N^T]^T,\label{eq:optimal2}
\end{align}
where $(\cdot)^*$ denotes optimality, $\mathbf{1}$ is the row vector of $N$ ones, $\mathbf{I}_{n}$ is the identity matrix of order $n$, $\otimes$ is the Kronecker product, and $\bar{\mathbf{P}}_{c}$ is the full covariance matrix of the network that includes the self-correlations $\bar{\mathbf{P}}_i$ and the cross-correlations $\bar{\mathbf{P}}_{ij}$. The optimality of~\eqref{eq:optimal1}-\eqref{eq:optimal2} is in terms of the MSE, defined as follows:
\begin{definition}[\textbf{Mean Square Error}]\label{definition:MSE}
The Mean-Square Error (MSE) is $\hbox{MSE} := \sum_{i=1}^{N} \hbox{E}[||\hat{\mathbf{x}}_i-\mathbf{x}||^2].$
\end{definition}
The optimal fusion in Eqs.~\eqref{eq:optimal1}-\eqref{eq:optimal2} is not scalable because it needs to track every $\bar{\mathbf{P}}_{ij},  \bar{\mathbf{P}}_{i}$ and $\bar{\mathbf{x}}_i$. This has a strong computational and memory burden, and requires communication with second order neighbours~\cite{Olfati2009DKF}. 

A typical alternative is to consider that the off-diagonal terms on $\bar{\mathbf{P}}_{c}$ are unknown and disregard non-neighbouring error covariances. Then, the optimal fusion can be achieved through optimisation, using the neighbouring predictions and predicted error covariances to produce an optimal bound of $\bar{\mathbf{P}}^*_i$ and the corresponding prediction. 
We rely on a bit abuse of notation and also denote this optimal bound by $\bar{\mathbf{P}}^*_i$, since it is the best result given unknown correlations and non-neighbouring error covariances.

This optimisation problem is intractable (Section~\ref{sec:solution}), which forces the use of convex relaxations to solve it. Despite enabling tractability, the relaxed solution is not guaranteed to be the original optimum. This motivates the need of certifiability on the optimisation, formally defined as follows:
\begin{definition}[\textbf{Certifiable Algorithms, from Definition 19 on~\cite{Yang2020Teaser}}]\label{definition:certifiability}
Given an optimization problem $\mathbb{O}(\mathbb{D})$ that depends on input data $\mathbb{D}$, we say that an algorithm $\mathbb{A}$ is certifiable if, after solving $\mathbb{O}(\mathbb{D})$, algorithm $\mathbb{A}$ either provides a certificate for the quality of its solution or declares failure otherwise.
\end{definition}

Now, we formulate the problem addressed in the paper.

\begin{problem}[\textbf{Certifiable optimal DKF under unknown correlations}]\label{problem:CO-DKF}
Find a stable algorithm that certificates locally and in real time if the estimation of system~\eqref{eq:system} carried out by each node is being optimal, i.e, at each $k$, each node $i$ minimises the MSE, under the following restrictions:
\begin{enumerate}
    \item \textbf{Unknown correlations}: The correlation terms $\mathbf{P_{ij}}(k)$ $\forall i \neq j$ are unknown $\forall i$ and $\forall k$.
    \item \textbf{Locality}: At instant $k$ and $\forall i$, node $i$ only uses $\mathbf{A}$, $\mathbf{H}_i$, $\mathbf{Q}(k)$, and $\mathbf{R}_i(k)$ as parameters. 
    \item \textbf{One-hop communication}: At instant $k$ and $\forall i$, node $i$ only communicates once with its neighbours $j \in \mathcal{N}_i$. 
\end{enumerate}
\end{problem}


\section{CO-DKF Algorithm}\label{sec:algorithm}

The proposed solution for Problem~\ref{problem:CO-DKF} is CO-DKF. We first offer an overview the algorithm and then, in Sections~\ref{sec:solution} and~\ref{sec:stability}, we formally study its main properties. 

At a given iteration of the algorithm, each node first sends and receives the prediction (with its covariance) and measurement (with its covariance) in information form to avoid the propagation of sensor models and unnecessary inversions of matrices,
\begin{equation*}
    \begin{aligned}
    \mathbf{u}_i &= \mathbf{H}_i^T\mathbf{R}_i^{-1}\mathbf{z}_i, &
    \mathbf{U}_i &= \mathbf{H}_i^T\mathbf{R}_i^{-1}\mathbf{H}_i,
    \\
    \bar{\mathbf{s}}_i &= \bar{\mathbf{P}}_i^{-1}\bar{\mathbf{x}}_i, & \bar{\mathbf{S}}_i &= \bar{\mathbf{P}}_i^{-1}.
    \end{aligned}
\end{equation*}
These quantities are aggregated using the information provided by the covariances. In the case of the measurements, this is direct because they are independent
\begin{flalign}
    \mathbf{y}_i = \frac{1}{p}\sum_{j\in \mathcal{J}_i} \mathbf{u}_j \hbox{ and }
    \mathbf{Y}_i = \frac{1}{p}\sum_{j\in \mathcal{J}_i} \mathbf{U}_j\label{eq:KF1}.
\end{flalign}
On the other hand, the aggregation of predictions is harder due to the correlations among them. Even if these correlations are unknown, they need to be taken into account into the fusion.
To solve it, we propose to use the outer Löwner-John (LJ) method~\cite{John2014Ellipsoid}, which gives the following relaxed optimisation problem  
\begin{subequations}\label{eq:outerLJellipse}
\begin{alignat}{2}
\bar{\mathbf{S}}_i^*, \lambda^* =  \:\:\: &\underset{\bar{\mathbf{S}},\lambda}{\arg\max}         \:\:\:\:\:\:\:\:\hbox{Tr}(\bar{\mathbf{S}})  \label{eq:optProb3}
\\
   \:\:\:\:\:\: s.t.  \:\:\:\:\:&\mathbf{0}  \prec \bar{\mathbf{S}} \preceq \sum_{j\in \mathcal{J}_i} \lambda_j \bar{\mathbf{S}}_j ,\label{eq:constraint31}
\\
&\sum_{j\in \mathcal{J}_i} \lambda_j   \leq 1 \hbox{ , } \lambda_j  \geq 0, \:\:\:\forall j \in \mathcal{J}_i \label{eq:constraint32}
\end{alignat}
\end{subequations}
where Tr$(\cdot)$ is the trace of a matrix, and $\prec$ and $\preceq$ denote definiteness and semidefiniteness. We recall that $\bar{\mathbf{P}}_i^{-1} = \bar{\mathbf{S}}_i$. The selection of the trace as the optimisation cost function follows from the definition of optimality, demonstrated later in the proof of Theorem~\ref{theorem:optimality}. The output of~\eqref{eq:outerLJellipse} is used to aggregate the predictions as follows
\begin{flalign}
    \bar{\mathbf{P}}^*_{i} = (\bar{\mathbf{S}}_i^*)^{-1} \hbox{ , }\:\:\:\:\:
    \bar{\mathbf{x}}_{i}^* = \bar{\mathbf{P}}_{i}^* \sum_{j\in \mathcal{J}_i} \lambda_j^*\bar{\mathbf{s}}_j\label{eq:KF_o1}.
\end{flalign}
The formal results over this aggregation are developed in Section~\ref{sec:solution}.
To the best of our knowledge, this is the first time optimisation problem~\eqref{eq:outerLJellipse} is applied to calculate $\bar{\mathbf{P}}^*_i$ and $\bar{\mathbf{x}}^*_i$ in a DKF. Furthermore, it enables the optimality certification of the CO-DKF.
With the data aggregated, each node calculates $\mathbf{M}_i:= \hbox{E}\{(\hat{\mathbf{x}}_i-\mathbf{x})(\hat{\mathbf{x}}_i-\mathbf{x})^T\}$ as
\begin{equation}
    \mathbf{M}_i = \left(\bar{\mathbf{S}}_i^* + \mathbf{Y}_i\right)^{-1}\label{eq:KF3}
\end{equation}
and then they correct the result of the predictions' aggregation by doing the following update
\begin{equation}\label{eq:CODKF_estimation}
    \hat{\mathbf{x}}_i^* = \bar{\mathbf{x}}_i^* + \mathbf{M}_i(\mathbf{y}_i - \mathbf{Y}_i\bar{\mathbf{x}}_i^*).
\end{equation}
It is noteworthy that the structure of Eq.~\eqref{eq:CODKF_estimation} is similar to 
\begin{equation}
    \label{eq:KF4}
    \hat{\mathbf{x}}_i =
        \bar{\mathbf{x}}_i + \mathbf{M}_i\left(\mathbf{y}_i - \mathbf{Y}_i\bar{\mathbf{x}}_i\right) +\gamma \mathbf{M}_i \sum_{j\in \mathcal{N}_i} (\bar{\mathbf{x}}_j - \bar{\mathbf{x}}_i),
\end{equation}
which comes from the consensus-based distributed Kalman filter in~\cite{Olfati2007DKF}. However, in CO-DKF the consensus is implicit in the optimisation, relaxing the dependence for stability on parameter $\gamma$. 
Finally, nodes predict for the next time-step,
\begin{flalign}
    \bar{\mathbf{P}}_i(k+1) &= \mathbf{A}\mathbf{M}_i(k)\mathbf{A}^T + \mathbf{Q}(k)\label{eq:KF5},
    \\
    \bar{\mathbf{x}}_i(k+1) &= \mathbf{A}\hat{\mathbf{x}}_i^*(k)\label{eq:KF6}.
\end{flalign}

We briefly discuss the communication and computational burden of this algorithm. Regarding communication, each node sends $\mathbf{U}_i, \mathbf{u}_i, \bar{\mathbf{S}}_i$ and $\bar{\mathbf{s}}_i.$
The size of the message is constant in the number of nodes, confirming the communication scalability of the proposal. Another advantage of CO-DKF is that nodes do not need any global knowledge of the topology nor sensor models of neighbours. Regarding computational cost, the bottleneck is the computation of~\eqref{eq:outerLJellipse}. The current hardware is able to solve large instances of this optimisation problem in real-time. 
This is discussed in more details in Section~\ref{sec:examples}.

In summary, CO-DKF is described by Algorithm~\ref{al:CO-DKF}.

\begin{algorithm}
\caption{CO-DKF in node $i$}\label{al:CO-DKF}
\begin{algorithmic}[1]
\STATE Initialisation: $\bar{\mathbf{P}}_i = \mathbf{P}_0$, $\bar{\mathbf{x}}_i = \mathbf{x}_0$
\WHILE{True}
    \STATE  Get $\mathbf{z}_i$, send $\{ \mathbf{U}_i, \mathbf{u}_i, \bar{\mathbf{S}}_i, \bar{\mathbf{s}}_i  \}$ and receive $\{ \mathbf{U}_j, \mathbf{u}_j, \bar{\mathbf{S}}_j, \bar{\mathbf{s}}_j  \}$ from neighbours.
    \STATE Aggregate measurements' data:
           \begin{itemize}
               \item[] $\mathbf{Y}_i = \frac{1}{p}\sum_{j\in \mathcal{J}_i} \mathbf{U}_j$ 
               \item[] $\mathbf{y}_i \:= \frac{1}{p}\sum_{j\in \mathcal{J}_i}  \mathbf{u}_j$ 
           \end{itemize}
    \STATE Aggregate predictions' data:
            \begin{itemize}
               \item[] \:$\bar{\mathbf{S}}_i^*$,$\mathbf{\lambda}^*$ $\leftarrow$ Solution of ~\eqref{eq:outerLJellipse}
               \item[] \:\:\:\:\:\:\:\:\:\:\:\:\:$\bar{\mathbf{P}}_{i}^* = (\bar{\mathbf{S}}_i^*)^{-1}$
               \item[] \:\:\:\:\:\:\:\:\:\:\:\:\:$\bar{\mathbf{x}}_{i}^* \:= \bar{\mathbf{P}}_{i}^* \sum_{j\in \mathcal{J}_i} \lambda_j^*\bar{\mathbf{s}}_j$
           \end{itemize}
    \STATE CO-DKF estimation update:
            \begin{itemize}
               \item[] $\mathbf{M}_i = (\bar{\mathbf{S}}^*_i + \mathbf{Y}_i)^{-1}$
               \item[] \:\:$\hat{\mathbf{x}}_i^* = \bar{\mathbf{x}}_{i}^* + \mathbf{M}_i(\mathbf{y}_i - \mathbf{Y}_i\bar{\mathbf{x}}_{i}^*)$
           \end{itemize}
    \STATE Compute the next-step prediction:
            \begin{itemize}
               \item[] $\bar{\mathbf{P}}_i(k+1) = \mathbf{A}\mathbf{M}_i(k)\mathbf{A}^T + \mathbf{Q}(k)$
               \item[] \:$\bar{\mathbf{x}}_i(k+1) = \mathbf{A}\hat{\mathbf{x}}_i^*(k)$
           \end{itemize}
\ENDWHILE 
\end{algorithmic}
\end{algorithm}


\section{Certifiable Covariance Bounding}\label{sec:solution}

To achieve optimality it is necessary to optimally aggregate the neighbouring predictions under unknown correlations. This fusion can be described as finding the minimum volume ellipsoid containing the intersection of the $p$ ellipsoids~\cite{Boyd1994LMI} formed by the matrices $\bar{\mathbf{P}}_j$ in $\mathcal{J}_i$. 
Thus, we first define the concept of ellipsoid and intersection of ellipsoids.
\begin{definition}[\textbf{Ellipsoid}]\label{definition:ellipsoid}
Given $\bar{\mathbf{S}}_i = (\bar{\mathbf{P}}_i)^{-1}$ and assuming unbiased sensors, the ellipsoid $\varepsilon_i$ is $\varepsilon_i := \{\,\mathbf{x} \,| \,\mathbf{x}^T \bar{\mathbf{S}}_i \mathbf{x} \leq 1 \, \}.$
\end{definition}
\begin{definition}[\textbf{Intersection of ellipsoids}]\label{definition:intersection}
The intersection of $p$ ellipsoids is the polytope 
$\mathcal{F} := \varepsilon_1 \cap \hdots \cap \varepsilon_p$.
\end{definition}

From Definitions~\ref{definition:ellipsoid}-\ref{definition:intersection}, minimising the volume can be transformed into a maximisation over the information matrices. In particular, since the MSE is equivalent to the trace of $\mathbf{M}_i,$ the objective of CO-DKF is to optimise this metric.

\begin{problem}[\textbf{Optimal intersection of ellipsoids}, adapted from~\cite{Boyd1994LMI}]\label{problem:intersection}
Find $\varepsilon^{*}_i$ such that contains $\mathcal{F}_i$ and \emph{Tr}$(\bar{\mathbf{S}}_i^*)$ is maximised.
\end{problem}
The solution of Problem~\ref{problem:intersection} is given by~\cite{Henrion2001LMI} 
\begin{subequations}\label{eq:traceProblem}
\begin{alignat}{2}
\underset{\mathbf{X}}{\max} & \:\:\:\:\:\:\:\hbox{Tr}(\mathbf{X}) \label{eq:optProb4}
\\
s.t. & \:\:\:\:\:\:\:    \hbox{Tr}(\mathbf{X} \bar{\mathbf{S}}_j)  \leq 1 \:\:\: \forall j \in \mathcal{J}_i,\label{eq:constraint41}
\\
&   \:\:\:\:\:\:\:   \mathbf{X}    \succeq \mathbf{0} \label{eq:constraint42}
\\
&   \:\:\:\:\:\:\:   \hbox{rank}(\mathbf{X})  = 1 \label{eq:constraint43}
\end{alignat}
\end{subequations}
Unfortunately, this problem is NP-hard because of constraint~\eqref{eq:constraint43}. Thus, it is necessary to find a convex relaxation. 
The simplest solution is to drop the non-convex constraint,
\begin{subequations}\label{eq:traceRelax}
\begin{alignat}{2}
\underset{\mathbf{X}}{\max} & \:\:\:\:\:\:\: \hbox{Tr}(\mathbf{X}) \label{eq:optProb5}
\\
s.t. & \:\:\:\:\:\:\:     \hbox{Tr}(\mathbf{X}\bar{\mathbf{S}}_j)  \leq 1 \:\:\: \forall j \in \mathcal{J}_i,\label{eq:constraint51}
\\
& \:\:\:\:\:\:\:\mathbf{X}    \succeq \mathbf{0} \label{eq:constraint52} 
\end{alignat}
\end{subequations}
However, once again it is not possible to use~\eqref{eq:traceRelax} in CO-DKF because it does not provide $\bar{\mathbf{S}}_i^*$ nor $\bar{\mathbf{x}}^*_i$, as is the case of the outer LJ relaxation in~\eqref{eq:outerLJellipse}.
Nevertheless, problem~\eqref{eq:traceRelax} is important because it enables the certification of optimality. This is formally stated in the next Proposition.

\begin{proposition}[\textbf{Certifiability}]\label{proposition:certifiability}
Let $\mathbf{X}^*$ be the solution of~\eqref{eq:traceRelax}.
Define
\begin{flalign}
    \mathbb{C}_i := \emph{rank}\left(\mathbf{X}^*\right) \hbox{ and }
    \rho_i &:= \emph{Tr}(\mathbf{X}^*) \vartheta(\bar{\mathbf{S}}_i^*)\in [0,1],
\end{flalign}
where $\vartheta(\bar{\mathbf{S}}_i^*)$ denotes the minimum eigenvalue of $\bar{\mathbf{S}}_i^*$, obtained solving~\eqref{eq:outerLJellipse}.
If $\mathbb{C}_i=\rho_i=1$, then the solution of~\eqref{eq:outerLJellipse} is the optimum of the original non-relaxed problem~\eqref{eq:traceProblem}.
\end{proposition}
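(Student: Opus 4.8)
The plan is to exploit that \eqref{eq:traceRelax} is precisely the SDP relaxation of the NP-hard \eqref{eq:traceProblem} obtained by dropping the rank constraint \eqref{eq:constraint43}, so its optimal value upper bounds that of \eqref{eq:traceProblem}, whereas \eqref{eq:outerLJellipse} is the (S-procedure / outer LJ) relaxation returning the bounding ellipsoid $\bar{\mathbf{S}}_i^*$. First I would use the hypothesis $\mathbb{C}_i=1$: writing $\mathbf{X}^*=\mathbf{x}^*(\mathbf{x}^*)^T$, this rank-one matrix is feasible for \eqref{eq:traceProblem} (it meets \eqref{eq:constraint41}--\eqref{eq:constraint43}) and attains the relaxed optimum, hence it also solves \eqref{eq:traceProblem}. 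Since for $\mathbf{X}=\mathbf{x}\mathbf{x}^T$ constraint \eqref{eq:constraint41} reads $\mathbf{x}^T\bar{\mathbf{S}}_j\mathbf{x}\le1$, i.e. $\mathbf{x}\in\varepsilon_j$, problem \eqref{eq:traceProblem} is exactly $\max_{\mathbf{x}\in\mathcal{F}_i}\lVert\mathbf{x}\rVert^2$, and its optimum equals $\mathrm{Tr}(\mathbf{X}^*)=\lVert\mathbf{x}^*\rVert^2$.

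Next I would tie this optimum to the ellipsoid from \eqref{eq:outerLJellipse}. Constraints \eqref{eq:constraint31}--\eqref{eq:constraint32} give, for every $\mathbf{x}\in\mathcal{F}_i$, $\mathbf{x}^T\bar{\mathbf{S}}_i^*\mathbf{x}\le\sum_{j\in\mathcal{J}_i}\lambda_j^*\,\mathbf{x}^T\bar{\mathbf{S}}_j\mathbf{x}\le\sum_{j}\lambda_j^*\le1$, so $\varepsilon_i^*\supseteq\mathcal{F}_i$ and thus the optimum of \eqref{eq:traceProblem} is at most $\max_{\mathbf{x}\in\varepsilon_i^*}\lVert\mathbf{x}\rVert^2=1/\vartheta(\bar{\mathbf{S}}_i^*)$, the farthest point of $\varepsilon_i^*$ lying along the eigenvector of its minimal eigenvalue. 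The central estimate is then
\[
\vartheta(\bar{\mathbf{S}}_i^*)\,\mathrm{Tr}(\mathbf{X}^*)\le\mathrm{Tr}(\mathbf{X}^*\bar{\mathbf{S}}_i^*)\le\mathrm{Tr}\Big(\mathbf{X}^*\textstyle\sum_{j\in\mathcal{J}_i}\lambda_j^*\bar{\mathbf{S}}_j\Big)=\sum_{j}\lambda_j^*\,\mathrm{Tr}(\mathbf{X}^*\bar{\mathbf{S}}_j)\le\sum_{j}\lambda_j^*\le1,
\]
whose first step uses $\bar{\mathbf{S}}_i^*\succeq\vartheta(\bar{\mathbf{S}}_i^*)\mathbf{I}$ with $\mathbf{X}^*\succeq\mathbf{0}$, the second uses \eqref{eq:constraint31}, and the last two use \eqref{eq:constraint41} and \eqref{eq:constraint32}. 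This already establishes $\rho_i=\mathrm{Tr}(\mathbf{X}^*)\,\vartheta(\bar{\mathbf{S}}_i^*)\in[0,1]$.

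Finally I would invoke $\rho_i=1$ to close the gap. This equality forces every inequality in the chain to be tight; combined with the first step (the optimum of \eqref{eq:traceProblem} equals $\mathrm{Tr}(\mathbf{X}^*)$) and the outer bound $\mathrm{Tr}(\mathbf{X}^*)\le1/\vartheta(\bar{\mathbf{S}}_i^*)$, it yields $\mathrm{Tr}(\mathbf{X}^*)=1/\vartheta(\bar{\mathbf{S}}_i^*)=\max_{\mathbf{x}\in\varepsilon_i^*}\lVert\mathbf{x}\rVert^2$. Hence the outer LJ ellipsoid attains the true optimum: its farthest point from the origin already lies in $\mathcal{F}_i$ and coincides with the maximizer $\mathbf{x}^*$ of \eqref{eq:traceProblem}. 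Therefore the relaxation gap of \eqref{eq:outerLJellipse} vanishes and its solution $\bar{\mathbf{S}}_i^*$ is optimal for the non-relaxed \eqref{eq:traceProblem} (equivalently Problem~\ref{problem:intersection}).

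The rank-one tightness in the first step is standard. The hard part will be the equality analysis of the chain and its geometric reading — interpreting $\mathrm{Tr}(\mathbf{X}^*\bar{\mathbf{S}}_i^*)=\vartheta(\bar{\mathbf{S}}_i^*)\mathrm{Tr}(\mathbf{X}^*)$ as the alignment of $\mathbf{x}^*$ with the minimal eigenspace of $\bar{\mathbf{S}}_i^*$ and $\mathrm{Tr}(\mathbf{X}^*\bar{\mathbf{S}}_j)=1$ as complementary slackness on the active weights $\lambda_j^*>0$ — and, above all, translating a scalar value match into a statement about the ellipsoid variable, since \eqref{eq:traceProblem} optimizes over $\mathbf{X}$ while \eqref{eq:outerLJellipse} optimizes over $\bar{\mathbf{S}}$; I must argue that equality of the extremal norms certifies $\bar{\mathbf{S}}_i^*$ as the tightest containing ellipsoid, not merely the coincidence of two numbers.
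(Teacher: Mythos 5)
Your proposal is correct and reaches the paper's conclusion, but by a genuinely different and more self-contained route. The paper's proof outsources both key facts to \cite{Henrion2001LMI}: the equivalence of \eqref{eq:traceRelax} and \eqref{eq:outerLJellipse} is invoked abstractly via the Lagrangian, Slater's condition and the S-procedure, and the bound $1/\vartheta(\bar{\mathbf{S}}_i^*)$ is quoted from Theorem 4 of that reference. You instead derive everything directly from the constraint sets: the containment $\varepsilon_i^*\supseteq\mathcal{F}_i$ follows from \eqref{eq:constraint31}--\eqref{eq:constraint32}, and the chain $\vartheta(\bar{\mathbf{S}}_i^*)\,\mathrm{Tr}(\mathbf{X}^*)\le\mathrm{Tr}(\mathbf{X}^*\bar{\mathbf{S}}_i^*)\le\sum_{j}\lambda_j^*\,\mathrm{Tr}(\mathbf{X}^*\bar{\mathbf{S}}_j)\le 1$ simultaneously proves $\rho_i\in[0,1]$ (which the paper asserts in the definition without justification) and exhibits $\rho_i=1$ as the zero-gap condition; your rank-one feasibility argument for $\mathbb{C}_i=1$ coincides with the paper's one-line contradiction. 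What the paper's route buys is a primal--dual equivalence between the two relaxations as optimization problems; your argument replaces this by the weaker but sufficient observation that each yields an upper bound on $\max_{\mathbf{x}\in\mathcal{F}_i}\lVert\mathbf{x}\rVert^2$ and that $\mathbb{C}_i=\rho_i=1$ forces both bounds to be attained. The residual difficulty you flag at the end --- that a scalar value match certifies only that the farthest point of $\varepsilon_i^*$ already lies in $\mathcal{F}_i$, not that $\bar{\mathbf{S}}_i^*$ is the max-trace containing ellipsoid of Problem~\ref{problem:intersection} --- is real, but it is a looseness in the proposition's own phrasing (an ellipsoid cannot literally ``be the optimum'' of the point-valued problem \eqref{eq:traceProblem}), and the paper's proof does not resolve it either; your reading is the one consistent with the radius-certification interpretation underlying \cite{Henrion2001LMI}, so I would not count it as a gap in your argument relative to the paper's.
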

\begin{proof}
If constraint~\eqref{eq:constraint43} holds for $\mathbf{X}^*$, then it is also the optimum of~\eqref{eq:traceProblem}. This can be demonstrated by contradiction. Consequently, $\mathbb{C}_i=1$ is a certificate of this equivalence.    

The next part is to show that the optimums of~\eqref{eq:traceRelax} and~\eqref{eq:outerLJellipse} are equivalent. This is described in detail in Section 2 of~\cite{Henrion2001LMI}, so we only sketch here the procedure. Obtained the Lagrangian of~\eqref{eq:traceRelax} and assuming that $\mathcal{F}$ does not reduce to zero, by Slater's condition both dual and primal have the same optimum. Then, by the S-procedure,~\eqref{eq:outerLJellipse} is equivalent to the dual and hence equivalent to~\eqref{eq:traceRelax}. 

If it is the case, then outputs of~\eqref{eq:traceRelax} and~\eqref{eq:outerLJellipse} can be compared to see how close is the solution of~\eqref{eq:outerLJellipse} to the solution of Problem~\ref{eq:traceProblem}. To do so, Theorem 4 of~\cite{Henrion2001LMI} states that the optimal value of~\eqref{eq:outerLJellipse} is bounded by $1/\vartheta(\bar{\mathbf{S}}^*_i)$, where $\bar{\mathbf{S}}^*_i$ is the output of the optimisation in~\eqref{eq:outerLJellipse}. Therefore, $\rho_i$ assesses the output of~\eqref{eq:outerLJellipse} comparing it with the optimal value certified by the constraint~\eqref{eq:constraint43}. If $\mathbb{C}_i=1$ and $\rho_i=1$, then the solution of~\eqref{eq:outerLJellipse} is the optimum of the original Problem~\ref{problem:intersection}.  
\end{proof}

Thus, the proposition gives a procedure to check, locally and online, the optimisation process by finding the optimal value for Problem~\ref{problem:intersection}. 


\section{Stability and Optimality}\label{sec:stability}

In this Section we analyse the global properties of CO-DKF. To this end, we define the update error as $\eta_i = \hat{\mathbf{x}}_i-{\mathbf{x}}$, 
\begin{equation*}
    \begin{aligned}
        \eta :=& [\eta_1^T, \hdots, \eta_{N}^T]^T,  & \mathbf{H} :=& \hbox{block-diag}(\mathbf{H}_1, \hdots, \mathbf{H}_{N})
        \\
        \mathbf{z} :=& [\mathbf{z}_1^T, \hdots, \mathbf{z}_N^T]^T, &\mathbf{R} :=& \hbox{block-diag}(\mathbf{R}_1, \hdots, \mathbf{R}_{N}),
        \\
        \mathbf{y} :=& [\mathbf{y}_1^T, \hdots, \mathbf{y}_N^T]^T,&\mathbf{U} :=& \hbox{block-diag}(\mathbf{U}_1, \hdots, \mathbf{U}_{N}),
        \\
        \mathbf{\mathcal{A}} :=& \mathbf{I} \otimes \mathbf{A}, &\mathbf{M} :=& \hbox{block-diag}(\mathbf{M}_1, \hdots, \mathbf{M}_{N}),
        \\
        \mathbf{\mathcal{Q}} :=& \mathbf{I} \otimes \mathbf{Q},&\bar{\mathbf{P}}^* :=& \hbox{block-diag}(\bar{\mathbf{P}}_{1}^*, \hdots, \bar{\mathbf{P}}_{N}^*),
    \end{aligned}
\end{equation*}
and
\begin{equation}\label{eq:nw}
\begin{aligned}
    \mathbf{Y} \overset{Eq.~\eqref{eq:KF1}}{=} \mathbf{N}_w \mathbf{U} \Rightarrow & 
    \left. \begin{array}{ll}
         |\mathbf{N}_w|_{ij} = \frac{1}{p}*\mathbf{I}  \hbox{ if } j\in \mathcal{J}_i  \\
         |\mathbf{N}_w|_{ij} = 0*\mathbf{I} \hbox{ otherwise}
    \end{array}\right..
\end{aligned}
\end{equation}

To demonstrate the global asymptotic stability of CO-DKF, we adapt two Lemmas from~\cite{Olfati2009DKF} and derive a novel one which will support our main result. 
\begin{lemma}[\textbf{Adapted from Lemma 2 from~\cite{Olfati2009DKF}}]\label{lemma:matrices}
Given Eqs.~\eqref{eq:KF3}-\eqref{eq:CODKF_estimation}, the following holds:
\begin{enumerate}
    \item $\mathbf{F} = \mathbf{I} - \mathbf{M}\mathbf{Y}=\mathbf{M}(\bar{\mathbf{P}}^*)^{-1}$.
    \item $\mathbf{M} = \mathbf{F}\mathbf{G}\mathbf{F}^T$ with $\mathbf{G} = \mathbf{\mathcal{A}}\mathbf{M}\mathbf{\mathcal{A}}^T + \mathbf{\mathcal{Q}} + \mathbf{T}\mathbf{R}^{-1}\mathbf{T}^T$ and $\mathbf{T} = \bar{\mathbf{P}}^* \mathbf{N}_w \mathbf{H}^T$.
\end{enumerate}
\end{lemma}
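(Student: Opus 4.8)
The plan is to establish the two identities in order, since the covariance identity in part~2 relies on the gain identity in part~1.

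Part~1 is a purely algebraic manipulation of the definition of the gain. Reading~\eqref{eq:KF3} in stacked form as $\mathbf{M}^{-1} = (\bar{\mathbf{P}}^*)^{-1} + \mathbf{Y}$, I would write
\[
\mathbf{F} = \mathbf{I} - \mathbf{M}\mathbf{Y} = \mathbf{M}\mathbf{M}^{-1} - \mathbf{M}\mathbf{Y} = \mathbf{M}\big(\mathbf{M}^{-1} - \mathbf{Y}\big) = \mathbf{M}(\bar{\mathbf{P}}^*)^{-1},
\]
which delivers both equalities at once. The only points to check are that $\mathbf{M}$ is invertible, guaranteed by $\bar{\mathbf{S}}^* \succ \mathbf{0}$ from~\eqref{eq:constraint31} and $\mathbf{Y} \succeq \mathbf{0}$, and that this block-level identity is stated consistently with the coupling $\mathbf{Y} = \mathbf{N}_w\mathbf{U}$ of~\eqref{eq:nw}. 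No probabilistic argument is needed here.

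For part~2, I would first derive the closed-loop error recursion and then take its second moment. Substituting the sensor model~\eqref{eq:distributedSensorModel} and the aggregation~\eqref{eq:KF1} into the update~\eqref{eq:CODKF_estimation}, the shared measurement becomes $\mathbf{y} = \mathbf{Y}\mathbf{x}_s + \mathbf{N}_w\mathbf{H}^T\mathbf{R}^{-1}\mathbf{v}$, where $\mathbf{x}_s$ is the true state replicated across the $N$ nodes. Using part~1 to collapse $\mathbf{I} - \mathbf{M}\mathbf{Y}$ into $\mathbf{F}$, the stacked error $\eta = \hat{\mathbf{x}}^* - \mathbf{x}_s$ obeys
\[
\eta = \mathbf{F}\,\bar{\eta} + \mathbf{M}\mathbf{N}_w\mathbf{H}^T\mathbf{R}^{-1}\mathbf{v}, \qquad \bar{\eta} = \bar{\mathbf{x}}^* - \mathbf{x}_s.
\]
Taking $\mathbf{M} = \hbox{E}[\eta\eta^T]$ and expanding, the two prediction/measurement cross-terms vanish because $\mathbf{v}(k)$ is ZGM and, by the independence of measurements assumed in Section~\ref{sec:preliminaries}, uncorrelated with the prediction error $\bar{\eta}$ (a function of past noises only); this is the sole probabilistic step. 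With $\hbox{E}[\bar{\eta}\bar{\eta}^T] = \bar{\mathbf{P}}^*$ and $\hbox{E}[\mathbf{v}\mathbf{v}^T] = \mathbf{R}$, and using the cancellation $\mathbf{R}^{-1}\mathbf{R}\mathbf{R}^{-1} = \mathbf{R}^{-1}$, this leaves $\mathbf{M} = \mathbf{F}\bar{\mathbf{P}}^*\mathbf{F}^T + \mathbf{M}\mathbf{N}_w\mathbf{H}^T\mathbf{R}^{-1}\mathbf{H}\mathbf{N}_w^T\mathbf{M}$.

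The final step repackages both terms as $\mathbf{F}\mathbf{G}\mathbf{F}^T$. Writing $\mathbf{T} = \bar{\mathbf{P}}^*\mathbf{N}_w\mathbf{H}^T$ and substituting $\mathbf{F} = \mathbf{M}(\bar{\mathbf{P}}^*)^{-1}$ from part~1, the inverse-covariance factors in $\mathbf{F}\,\mathbf{T}\mathbf{R}^{-1}\mathbf{T}^T\,\mathbf{F}^T$ telescope against $\bar{\mathbf{P}}^*$ and reproduce exactly the surviving noise term $\mathbf{M}\mathbf{N}_w\mathbf{H}^T\mathbf{R}^{-1}\mathbf{H}\mathbf{N}_w^T\mathbf{M}$. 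Combining with $\mathbf{F}\bar{\mathbf{P}}^*\mathbf{F}^T$ and invoking the prediction step~\eqref{eq:KF5} in stacked form, $\bar{\mathbf{P}}^* = \mathbf{\mathcal{A}}\mathbf{M}\mathbf{\mathcal{A}}^T + \mathbf{\mathcal{Q}}$, gives $\mathbf{M} = \mathbf{F}\big(\mathbf{\mathcal{A}}\mathbf{M}\mathbf{\mathcal{A}}^T + \mathbf{\mathcal{Q}} + \mathbf{T}\mathbf{R}^{-1}\mathbf{T}^T\big)\mathbf{F}^T = \mathbf{F}\mathbf{G}\mathbf{F}^T$, as claimed. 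I expect the main obstacle to be the bookkeeping of the network coupling carried by $\mathbf{N}_w$: because neighbours share measurements, the injected noise is correlated across nodes, so the second-moment calculation must be carried out on the full block structure rather than node-by-node, and the symmetric term $\mathbf{T}\mathbf{R}^{-1}\mathbf{T}^T$ is precisely the device that absorbs this coupling. Getting the placement of $\mathbf{T}$ relative to $\mathbf{F}$ and $\mathbf{F}^T$ right, and confirming that the $\mathbf{N}_w$-dependent factors telescope cleanly, is where the care lies.
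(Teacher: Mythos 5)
Your proof is correct and follows essentially the same route as the paper: part~1 is the identical algebraic manipulation of $\mathbf{M} = ((\bar{\mathbf{P}}^*)^{-1}+\mathbf{Y})^{-1}$, and part~2 rests on the same three ingredients --- the standard-form Kalman gain $\mathbf{K}=\mathbf{M}\mathbf{N}_w\mathbf{H}^T\mathbf{R}^{-1}$, the KF covariance update $\mathbf{M}=\mathbf{F}\bar{\mathbf{P}}^*\mathbf{F}^T+\mathbf{K}\mathbf{R}\mathbf{K}^T$, and the substitution $\mathbf{M}=\mathbf{F}\bar{\mathbf{P}}^*$ from part~1 combined with the stacked prediction step~\eqref{eq:KF5}. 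The only difference is that you re-derive the covariance update from the error recursion and a second-moment computation, whereas the paper simply invokes it as the standard KF update.
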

\begin{proof}
Let $\mathbf{F} = \mathbf{I} - \mathbf{M}\mathbf{Y}$, by Eq.~\eqref{eq:KF3} we can write $\mathbf{M} = ((\bar{\mathbf{P}}^*)^{-1} + \mathbf{Y})^{-1}$. Then,
\begin{equation}
  \mathbf{M}((\bar{\mathbf{P}}^*)^{-1} + \mathbf{Y}) = \mathbf{I} \Rightarrow \mathbf{I} - \mathbf{M}\mathbf{Y}=\mathbf{M}(\bar{\mathbf{P}}^*)^{-1}
\end{equation}
and the first statement is proved. Regarding the second statement, notice that Eq.~\eqref{eq:CODKF_estimation} can be stacked as follows
\begin{equation}\label{eq:CODKF_estimation_stacked}
    \hat{\mathbf{x}}^* = \bar{\mathbf{x}}^* + \mathbf{M}(\mathbf{y} - \mathbf{Y}\bar{\mathbf{x}}^*).
\end{equation}
Then, taking into account that $\mathbf{y}=\mathbf{N}_w\mathbf{H}^T\mathbf{R}^{-1}\mathbf{z}$ and $\mathbf{Y}=\mathbf{N}_w\mathbf{U} = \mathbf{N}_w\mathbf{H}^T\mathbf{R}^{-1}\mathbf{H}$, Eq.~\eqref{eq:CODKF_estimation_stacked} is rewritten from its information form as
\begin{equation}\label{eq:CODKF_estimation_standard}
    \hat{\mathbf{x}}^* = \bar{\mathbf{x}}^* + \mathbf{M}\mathbf{N}_w\mathbf{H}^T\mathbf{R}^{-1}(\mathbf{z} - \mathbf{H}\bar{\mathbf{x}}^*) = \bar{\mathbf{x}}^* + \mathbf{K}(\mathbf{z} - \mathbf{H}\bar{\mathbf{x}}^*) 
\end{equation}
with $\mathbf{K}=\mathbf{M}\mathbf{N}_w\mathbf{H}^T\mathbf{R}^{-1}$ the Kalman Gain of the standard KF formulation. Let now consider the update of $\mathbf{M}$ as
\begin{equation}\label{eq:standard}
    \mathbf{M}^{+} = \mathbf{F}\mathbf{P}^+\mathbf{F}^T + \mathbf{K}\mathbf{R}\mathbf{K}^T,    
\end{equation}
where $(\cdot)^+$ is the update operator. This expression comes from the standard KF~\cite{Olfati2009DKF}. Therefore, substituting $\mathbf{K}$ we obtain
\begin{equation}\label{eq:standard2}
    \mathbf{M}^{+} = \mathbf{F}(\mathbf{\mathcal{A}}\mathbf{M}\mathbf{\mathcal{A}}^T + \mathbf{\mathcal{Q}})\mathbf{F}^T + \mathbf{M}\mathbf{N}_w\mathbf{H}^T\mathbf{R}^{-1}\mathbf{H}\mathbf{N}_w^T\mathbf{M}^T.   
\end{equation}
Given statement 1 of the Lemma,
\begin{equation}\label{eq:standard3}
    \mathbf{M}^{+} = \mathbf{F}(\mathbf{\mathcal{A}}\mathbf{M}\mathbf{\mathcal{A}}^T + \mathbf{\mathcal{Q}} + \bar{\mathbf{P}}^*\mathbf{N}_w\mathbf{H}^T\mathbf{R}^{-1}\mathbf{H}\mathbf{N}_w^T(\bar{\mathbf{P}}^*)^T)\mathbf{F}^T,   
\end{equation}
and statement 2 is proved.
\end{proof}
\begin{lemma}[\textbf{Adapted from Lemma 3 from~\cite{Olfati2009DKF}}]\label{lemma:stability}
Suppose that the error dynamics without noise are $\eta^+ = \mathbf{F}\mathbf{\mathcal{A}}\eta$ with $\mathbf{F}$ defined as in Lemma~\ref{lemma:matrices}. Then, the error dynamics is globally asymptotically stable system with a Lyapunov function $V(\eta) = \eta^T \mathbf{M}^{-1} \eta$ and $\Lambda = \mathbf{M}^{-1} - \mathbf{\mathcal{A}}^T\mathbf{G}^{-1} \mathbf{\mathcal{A}} \succ \mathbf{0}$.
\end{lemma}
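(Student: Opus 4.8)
The plan is to read $\eta^+ = \mathbf{F}\mathbf{\mathcal{A}}\eta$ as a discrete-time (generally time-varying) linear system and invoke the discrete Lyapunov theorem with the candidate $V(\eta) = \eta^T\mathbf{M}^{-1}\eta$. First I would confirm $V$ is admissible: since each $\mathbf{M}_i = (\bar{\mathbf{S}}_i^* + \mathbf{Y}_i)^{-1}$ is positive definite, the block-diagonal $\mathbf{M}$ and hence $\mathbf{M}^{-1}$ are positive definite, so $V(\eta) > 0$ for $\eta \neq \mathbf{0}$, $V(\mathbf{0}) = 0$, and $V$ is radially unbounded.

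The core step is to evaluate the decrement $\Delta V = V(\eta^+) - V(\eta)$ along the trajectory. I would use statement~2 of Lemma~\ref{lemma:matrices} in its one-step-ahead form $\mathbf{M}^+ = \mathbf{F}\mathbf{G}\mathbf{F}^T$; because $\mathbf{F} = \mathbf{M}(\bar{\mathbf{P}}^*)^{-1}$ is a product of invertible matrices, $(\mathbf{M}^+)^{-1} = \mathbf{F}^{-T}\mathbf{G}^{-1}\mathbf{F}^{-1}$. Substituting $\eta^+ = \mathbf{F}\mathbf{\mathcal{A}}\eta$, the factors $\mathbf{F}^T\mathbf{F}^{-T}$ and $\mathbf{F}^{-1}\mathbf{F}$ cancel and I obtain
\begin{equation*}
V(\eta^+) = \eta^T\mathbf{\mathcal{A}}^T\mathbf{G}^{-1}\mathbf{\mathcal{A}}\eta, \qquad \Delta V = -\,\eta^T\!\left(\mathbf{M}^{-1} - \mathbf{\mathcal{A}}^T\mathbf{G}^{-1}\mathbf{\mathcal{A}}\right)\eta = -\,\eta^T\Lambda\eta.
\end{equation*}
Hence the whole claim collapses to proving $\Lambda \succ \mathbf{0}$.

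The hard part is establishing $\Lambda = \mathbf{M}^{-1} - \mathbf{\mathcal{A}}^T\mathbf{G}^{-1}\mathbf{\mathcal{A}} \succ \mathbf{0}$ \emph{without} assuming $\mathbf{A}$ (equivalently $\mathbf{\mathcal{A}}$) invertible. Writing $\mathbf{G} = \mathbf{\mathcal{A}}\mathbf{M}\mathbf{\mathcal{A}}^T + \mathbf{W}$ with $\mathbf{W} = \mathbf{\mathcal{Q}} + \mathbf{T}\mathbf{R}^{-1}\mathbf{T}^T$, the naive route notes $\mathbf{G} \succ \mathbf{\mathcal{A}}\mathbf{M}\mathbf{\mathcal{A}}^T$ and inverts: when $\mathbf{\mathcal{A}}$ is invertible this gives $\mathbf{\mathcal{A}}^T\mathbf{G}^{-1}\mathbf{\mathcal{A}} \prec \mathbf{M}^{-1}$ at once, but it breaks down for singular $\mathbf{A}$, since $\mathbf{\mathcal{A}}\mathbf{M}\mathbf{\mathcal{A}}^T$ is then singular. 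To cover every case I would instead apply the Woodbury identity to $\mathbf{G}^{-1}$ and set $\mathbf{Z} = \mathbf{\mathcal{A}}^T\mathbf{W}^{-1}\mathbf{\mathcal{A}} \succeq \mathbf{0}$, which after simplification yields
\begin{equation*}
\mathbf{\mathcal{A}}^T\mathbf{G}^{-1}\mathbf{\mathcal{A}} = \mathbf{Z} - \mathbf{Z}(\mathbf{M}^{-1} + \mathbf{Z})^{-1}\mathbf{Z}, \qquad \Lambda = \mathbf{M}^{-1}(\mathbf{M}^{-1} + \mathbf{Z})^{-1}\mathbf{M}^{-1}.
\end{equation*}
The right-hand expression is a congruence of the positive-definite matrix $(\mathbf{M}^{-1} + \mathbf{Z})^{-1}$ by the invertible $\mathbf{M}^{-1}$, so $\Lambda \succ \mathbf{0}$ follows immediately. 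The only ingredient I rely on is $\mathbf{W} \succ \mathbf{0}$, which holds as soon as the process noise satisfies $\mathbf{\mathcal{Q}} \succ \mathbf{0}$.

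Finally I would close the loop: $V$ is positive definite and radially unbounded while $\Delta V = -\eta^T\Lambda\eta < 0$ for all $\eta \neq \mathbf{0}$, so the discrete Lyapunov theorem gives global asymptotic stability of $\eta^+ = \mathbf{F}\mathbf{\mathcal{A}}\eta$. Because $\mathbf{M}$ is time-varying, I would additionally invoke uniform bounds $c_1\mathbf{I} \preceq \mathbf{M}^{-1} \preceq c_2\mathbf{I}$ on the Riccati iterate (available from the detectability/observability hypothesis) so that the decrement is uniformly negative definite; I expect this uniformity, rather than the algebraic sign of $\Lambda$, to be the only point beyond the Woodbury manipulation that needs care.
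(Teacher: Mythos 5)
Your proposal is correct, and the Lyapunov decrement computation is exactly the paper's: both substitute $\eta^+=\mathbf{F}\mathbf{\mathcal{A}}\eta$ into $V(\eta)=\eta^T\mathbf{M}^{-1}\eta$, use $\mathbf{M}^+=\mathbf{F}\mathbf{G}\mathbf{F}^T$ from Lemma~\ref{lemma:matrices} together with the invertibility of $\mathbf{F}=\mathbf{M}(\bar{\mathbf{P}}^*)^{-1}$ to cancel the $\mathbf{F}$ factors, and arrive at $\Delta V=-\eta^T\Lambda\eta$. Where you genuinely diverge is the key step $\Lambda\succ\mathbf{0}$: the paper simply defers this to Lemma~3 of~\cite{Olfati2009DKF}, whereas you give a self-contained argument via the Woodbury identity, writing $\mathbf{\mathcal{A}}^T\mathbf{G}^{-1}\mathbf{\mathcal{A}}=\mathbf{Z}-\mathbf{Z}(\mathbf{M}^{-1}+\mathbf{Z})^{-1}\mathbf{Z}$ with $\mathbf{Z}=\mathbf{\mathcal{A}}^T\mathbf{W}^{-1}\mathbf{\mathcal{A}}$ and collapsing $\Lambda$ to the congruence $\mathbf{M}^{-1}(\mathbf{M}^{-1}+\mathbf{Z})^{-1}\mathbf{M}^{-1}$; I checked the identity $(\mathbf{B}-\mathbf{Z})\mathbf{B}^{-1}(\mathbf{B}-\mathbf{Z})=\mathbf{M}^{-1}-\mathbf{Z}+\mathbf{Z}\mathbf{B}^{-1}\mathbf{Z}$ with $\mathbf{B}=\mathbf{M}^{-1}+\mathbf{Z}$ and it holds. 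This buys two things the paper's citation does not make explicit: it avoids the naive inversion of $\mathbf{G}\succ\mathbf{\mathcal{A}}\mathbf{M}\mathbf{\mathcal{A}}^T$ that fails for singular $\mathbf{A}$, and it isolates the single hypothesis actually needed, $\mathbf{W}=\mathbf{\mathcal{Q}}+\mathbf{T}\mathbf{R}^{-1}\mathbf{T}^T\succ\mathbf{0}$ (i.e.\ $\mathbf{Q}\succ\mathbf{0}$). Your closing caveat about needing uniform bounds $c_1\mathbf{I}\preceq\mathbf{M}^{-1}\preceq c_2\mathbf{I}$ to conclude global asymptotic stability of the time-varying system, rather than mere pointwise negativity of $\Delta V$ at each step, is a legitimate gap that the paper's proof glosses over; flagging it strengthens rather than weakens your argument.
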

\begin{proof}
Given $V(\eta) = \eta^T \mathbf{M}^{-1} \eta$ as Lyapunov function candidate,
\begin{equation}
\begin{aligned}
    \delta V =& (\eta^+)^T (\mathbf{M}^+)^{-1} \eta^+ - \eta^T \mathbf{M}^{-1} \eta =\\
              & \eta^T\left.(\mathbf{\mathcal{A}}^T\mathbf{F}^T (\mathbf{M}^+)^{-1} \mathbf{F} \mathbf{\mathcal{A}} - \mathbf{M}^{-1})\right. \eta = \\
              & \eta^T\left.(\mathbf{\mathcal{A}}^T\mathbf{G}^{-1} \mathbf{\mathcal{A}} - \mathbf{M}^{-1})\right. \eta = \\
              & -\eta^T\left.(\mathbf{M}^{-1} - \mathbf{\mathcal{A}}^T(\mathbf{\mathcal{A}}\mathbf{M}\mathbf{\mathcal{A}}^T + \mathbf{W})^{-1} \mathbf{\mathcal{A}})\right. \eta = \\
              & -\eta^T \Lambda \eta,
\end{aligned}
\end{equation}
with $\mathbf{W} = \mathbf{\mathcal{Q}} + \mathbf{T}\mathbf{R}^{-1}\mathbf{T}^T \succ \mathbf{0}$ and $\Lambda = \mathbf{M}^{-1} - \mathbf{\mathcal{A}}^T(\mathbf{\mathcal{A}}\mathbf{M}\mathbf{\mathcal{A}}^T + \mathbf{W})^{-1}\mathbf{\mathcal{A}}$. The rest of the proof directly follows from Lemma 3 of~\cite{Olfati2009DKF}, showing that $\Lambda \succ \mathbf{0}$. Therefore, the error dynamics is globally asymptotically stable.
\end{proof}
Finally, to prove stability of CO-DKF we provide a Lemma associated to~\eqref{eq:outerLJellipse}.  
show that, if the predicted covariance matrices converge to a constant value, then  constraint~\eqref{eq:constraint31} in an strict equality.
\begin{lemma}[\textbf{Strong equalities}]
\label{lemma:convergence}
Given optimisation problem~\eqref{eq:outerLJellipse}, $\bar{\mathbf{S}}^*_i = \sum_{j=1}^p \lambda_j^* \bar{\mathbf{S}}_j$ with $j \in \mathcal{J}_i$, for all $i$. 
\end{lemma}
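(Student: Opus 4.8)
The plan is to prove that the semidefinite constraint \eqref{eq:constraint31} is necessarily active at every optimum, which is exactly the claimed identity. I would exploit the separable structure of \eqref{eq:outerLJellipse} by first fixing any $\lambda$ feasible for \eqref{eq:constraint32} and solving the inner maximisation over $\bar{\mathbf{S}}$ alone. Writing $\mathbf{M}(\lambda):=\sum_{j\in\mathcal{J}_i}\lambda_j\bar{\mathbf{S}}_j$, note that each $\bar{\mathbf{S}}_j=\bar{\mathbf{P}}_j^{-1}\succ\mathbf{0}$; moreover any feasible point must have $\lambda\neq\mathbf{0}$, since $\lambda=\mathbf{0}$ would force $\bar{\mathbf{S}}\preceq\mathbf{0}$ and contradict $\bar{\mathbf{S}}\succ\mathbf{0}$. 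Hence $\mathbf{M}(\lambda)\succ\mathbf{0}$ and, in particular, $\bar{\mathbf{S}}=\mathbf{M}(\lambda)$ is itself feasible for the inner problem.

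Next I would invoke monotonicity of the trace under the Löwner order. For any feasible $\bar{\mathbf{S}}$ the matrix $\mathbf{M}(\lambda)-\bar{\mathbf{S}}$ is positive semidefinite, so $\hbox{Tr}(\mathbf{M}(\lambda)-\bar{\mathbf{S}})\ge 0$, giving the upper bound $\hbox{Tr}(\bar{\mathbf{S}})\le\hbox{Tr}(\mathbf{M}(\lambda))$, attained at $\bar{\mathbf{S}}=\mathbf{M}(\lambda)$. The decisive step is the equality case: if $\hbox{Tr}(\bar{\mathbf{S}})=\hbox{Tr}(\mathbf{M}(\lambda))$ then $\mathbf{M}(\lambda)-\bar{\mathbf{S}}$ is positive semidefinite with zero trace, i.e. its eigenvalues are nonnegative and sum to zero, so they all vanish and $\mathbf{M}(\lambda)-\bar{\mathbf{S}}=\mathbf{0}$. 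Thus for each fixed feasible $\lambda$ the inner maximiser is unique and equals $\mathbf{M}(\lambda)$.

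Finally, since $\bar{\mathbf{S}}=\mathbf{M}(\lambda)$ is optimal for the inner problem at every feasible $\lambda$, it is optimal at the multipliers $\lambda^*$ produced by \eqref{eq:outerLJellipse}; evaluating there yields $\bar{\mathbf{S}}^*_i=\mathbf{M}(\lambda^*)=\sum_{j\in\mathcal{J}_i}\lambda_j^*\bar{\mathbf{S}}_j$, which is the asserted strong equality (the residual outer problem then reduces to the linear program $\max_{\lambda}\sum_j\lambda_j\hbox{Tr}(\bar{\mathbf{S}}_j)$ over \eqref{eq:constraint32}, but this is not needed here). I expect the only genuinely delicate point to be the equality case of the trace bound, namely rigorously arguing that a positive semidefinite matrix of zero trace must vanish; a minor companion technicality is checking that the strict constraint $\bar{\mathbf{S}}\succ\mathbf{0}$ does not obstruct attainment, which is settled by $\mathbf{M}(\lambda^*)\succ\mathbf{0}$ placing the optimum in the open feasible region.
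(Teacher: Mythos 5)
Your proof is correct and follows essentially the same route as the paper: both arguments hinge on the fact that under constraint~\eqref{eq:constraint31} the objective satisfies $\hbox{Tr}(\bar{\mathbf{S}})\le\hbox{Tr}\bigl(\sum_{j\in\mathcal{J}_i}\lambda_j\bar{\mathbf{S}}_j\bigr)$, so the semidefinite constraint must be active at the optimum. Your treatment of the equality case --- a positive semidefinite matrix with zero trace must vanish --- is in fact a more rigorous rendering of the paper's looser ``share eigenvalues'' and ``uniqueness of the optimum'' argument, and your remark that any feasible point has $\lambda\neq\mathbf{0}$ (hence $\sum_j\lambda_j\bar{\mathbf{S}}_j\succ\mathbf{0}$, so the candidate maximiser respects the strict constraint) fills a detail the paper leaves implicit.
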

\begin{proof}
Problem~\eqref{eq:outerLJellipse} is a convex maximisation with respect to $\bar{\mathbf{S}}$ and $\lambda$, which means that its optimum is unique. Notice that the cost function is the trace of $\bar{\mathbf{S}}$. Therefore, maximisation of Tr$(\bar{\mathbf{S}})$ with respect to $\lambda$ implies that the maximum is found when constraint~\eqref{eq:constraint32} is an equality and $\sum_{j=1}^p \lambda_j^* \bar{\mathbf{S}}_j$ is the convex combination of $\bar{\mathbf{S}}_j$ with maximum trace. This can be demonstrated by contradiction: if $\sum_{j=1}^p \lambda_j < 1$ then there exists another convex combination of $\bar{\mathbf{S}}_j$ with a higher trace, which means that the optimum is with $\sum_{j=1}^p \lambda_j = 1$; in addition, the values of $\lambda_j$ must be such that $\sum_{j=1}^p \lambda_j \bar{\mathbf{S}}_j$ is maximised, otherwise there exists a different convex combination of $\bar{\mathbf{S}}_j$ with a higher trace. 

Now, notice that $\sum_{j=1}^p \lambda_j^* \bar{\mathbf{S}}_j$ is in constraint~\eqref{eq:constraint31} bounding the value of $\bar{\mathbf{S}}$ such that matrix $\sum_{j=1}^p \lambda_j^* \bar{\mathbf{S}}_j - \bar{\mathbf{S}}$ must be positive semidefinite. Therefore, the optimal value of problem~\eqref{eq:outerLJellipse} must be the supremum of constraint~\eqref{eq:constraint31} and the optimal value of $\bar{\mathbf{S}}$ must share eigenvalues with $\sum_{j=1}^p \lambda_j^* \bar{\mathbf{S}}_j$. Since they share eigenvalues, the trace of both quantities are equal and $\hbox{Tr}(\bar{\mathbf{S}}^*_i) = \hbox{Tr}(\sum_{j=1}^p \lambda_j^* \bar{\mathbf{S}}_j)$ holds. Given the uniqueness of optimum, the optimal value of $\bar{\mathbf{S}}$ is $\sum_{j=1}^p \lambda_j^* \bar{\mathbf{S}}_j$ and the statement of the Lemma is proved.

\end{proof}

In essence, Lemma~\ref{lemma:convergence} says that the optimisation in~\eqref{eq:outerLJellipse} becomes a standard discrete-time consensus protocol tuned to optimise the trace of the final consensus value of each node. Interestingly, from~\eqref{eq:KF3} we can expect that the optimisation will ponder more the nodes equipped with better sensors, which is a positive side effect of CO-DKF when compared with the widely used update equation in~\eqref{eq:KF4}.

We can now demonstrate the stability of the estimator.
\begin{theorem}[\textbf{Stability}]\label{theorem:stability}
CO-DKF is a globally asymptotically stable estimator.
\end{theorem}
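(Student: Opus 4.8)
The plan is to reduce the claim to Lemma~\ref{lemma:stability}: if the global, noise-free error dynamics of CO-DKF can be brought to the form $\eta^+ = \mathbf{F}\mathbf{\mathcal{A}}\eta$ with the matrices $\mathbf{F}$, $\mathbf{M}$, $\mathbf{G}$ furnished by Lemma~\ref{lemma:matrices}, then Lemma~\ref{lemma:stability} delivers global asymptotic stability via the Lyapunov function $V(\eta)=\eta^T\mathbf{M}^{-1}\eta$. Hence the real work is to assemble the stacked error recursion of the full filter and to match it to that hypothesis, with Lemma~\ref{lemma:convergence} supplying the missing link about the prediction-aggregation step.

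First I would stack the per-node equations. The update is already in standard form in the proof of Lemma~\ref{lemma:matrices}, namely $\hat{\mathbf{x}}^* = \bar{\mathbf{x}}^* + \mathbf{K}(\mathbf{z}-\mathbf{H}\bar{\mathbf{x}}^*)$ with $\mathbf{K}=\mathbf{M}\mathbf{N}_w\mathbf{H}^T\mathbf{R}^{-1}$. Writing the stacked true state as $\tilde{\mathbf{x}}=[\mathbf{x}^T,\hdots,\mathbf{x}^T]^T$ so that $\mathbf{z}=\mathbf{H}\tilde{\mathbf{x}}+\mathbf{v}$, subtracting $\tilde{\mathbf{x}}$ and setting $\mathbf{v}=\mathbf{0}$ gives $\eta=(\mathbf{I}-\mathbf{K}\mathbf{H})\bar{\eta}^*$ with $\bar{\eta}^*=\bar{\mathbf{x}}^*-\tilde{\mathbf{x}}$; since $\mathbf{K}\mathbf{H}=\mathbf{M}\mathbf{N}_w\mathbf{U}=\mathbf{M}\mathbf{Y}$, this is exactly $\eta=\mathbf{F}\bar{\eta}^*$. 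Next I would use Lemma~\ref{lemma:convergence}: because $\bar{\mathbf{S}}_i^* = \sum_{j\in\mathcal{J}_i}\lambda_j^*\bar{\mathbf{S}}_j$, Eq.~\eqref{eq:KF_o1} reads $\bar{\mathbf{x}}_i^* = (\bar{\mathbf{S}}_i^*)^{-1}\sum_{j\in\mathcal{J}_i}\lambda_j^*\bar{\mathbf{S}}_j\bar{\mathbf{x}}_j$, i.e.\ $\bar{\mathbf{x}}^* = \mathbf{W}\bar{\mathbf{x}}$ where $\mathbf{W}$ has blocks $\mathbf{W}_{ij}=(\bar{\mathbf{S}}_i^*)^{-1}\lambda_j^*\bar{\mathbf{S}}_j$ with $\sum_{j}\mathbf{W}_{ij}=\mathbf{I}$. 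This is precisely the implicit consensus anticipated in the remark after Lemma~\ref{lemma:convergence}; since $\mathbf{W}\tilde{\mathbf{x}}=\tilde{\mathbf{x}}$, it follows that $\bar{\eta}^*=\mathbf{W}\bar{\eta}$ with $\bar{\eta}=\bar{\mathbf{x}}-\tilde{\mathbf{x}}$. Combining with the noise-free prediction Eq.~\eqref{eq:KF6}, which contributes $\bar{\eta}^+=\mathbf{\mathcal{A}}\eta$, the composed map is $\eta^+ = \mathbf{F}\mathbf{W}\mathbf{\mathcal{A}}\eta$.

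The crux is reconciling this with the hypothesis $\eta^+=\mathbf{F}\mathbf{\mathcal{A}}\eta$ of Lemma~\ref{lemma:stability}, i.e.\ handling the extra consensus factor $\mathbf{W}$. My preferred route is to fold $\mathbf{W}$ into the information-form bookkeeping: since $\bar{\mathbf{S}}^*$, and therefore $\mathbf{M}$ and $\mathbf{F}=\mathbf{M}(\bar{\mathbf{P}}^*)^{-1}$, are built from the same optimal weights $\lambda_j^*$ that define $\mathbf{W}$, the product $\mathbf{F}\mathbf{W}$ should re-express with the aggregated information matrix $\sum_j\lambda_j^*\bar{\mathbf{S}}_j$ absorbed into $\mathbf{M}(\bar{\mathbf{P}}^*)^{-1}$, recovering the single-factor form. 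Should a clean algebraic collapse fail, the fallback is to rerun the Lyapunov computation of Lemma~\ref{lemma:stability} carrying $\mathbf{W}$ and to exploit that a row-stochastic consensus map is non-expansive on the disagreement component, so that the increment $\delta V$ is at worst weakened by a factor controlled by $\mathbf{W}$ while staying negative definite.

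Finally I would invoke Lemma~\ref{lemma:stability} to conclude that $V$ strictly decreases and that the error is globally asymptotically stable. This requires $\Lambda\succ\mathbf{0}$, which in turn presumes a steady-state $\bar{\mathbf{P}}^*$, a fixed point of the Riccati-plus-aggregation iteration in Eqs.~\eqref{eq:outerLJellipse},~\eqref{eq:KF3},~\eqref{eq:KF5}, together with the standing observability assumption. I expect the main obstacle to be exactly the reconciliation of the preceding paragraph: proving rigorously that the implicit consensus $\mathbf{W}$ does not destroy the strict Lyapunov decrease, that is, that the agreement-subspace contraction supplied by the Kalman correction and the disagreement contraction supplied by consensus are mutually compatible. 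A secondary difficulty is establishing existence and uniqueness of the steady-state $\bar{\mathbf{P}}^*$ needed for $\Lambda\succ\mathbf{0}$, where the consensus-convergence reading of Lemma~\ref{lemma:convergence} is doing the heavy lifting.
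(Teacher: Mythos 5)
Your derivation of the stacked noise-free error dynamics $\eta^+=\mathbf{F}\mathbf{W}\mathbf{\mathcal{A}}\eta$, with $\mathbf{W}$ the row-stochastic implicit-consensus matrix obtained from Lemma~\ref{lemma:convergence} (the paper's $\mathbf{L}_w$, with blocks $\bar{\mathbf{P}}_i^*\lambda_j^*\bar{\mathbf{P}}_j^{-1}$), is exactly the paper's proof. The paper then resolves your ``crux'' by your fallback route rather than your preferred one --- there is no algebraic collapse of $\mathbf{F}\mathbf{W}$ into the single-factor form of Lemma~\ref{lemma:stability}; instead it carries $\mathbf{L}_w$ through the Lyapunov decrement, obtaining $\Lambda'=\mathbf{M}^{-1}-\mathbf{\mathcal{A}}^T\mathbf{L}_w^T\mathbf{G}^{-1}\mathbf{L}_w\mathbf{\mathcal{A}}$, and concludes $\Lambda'\succ\mathbf{0}$ from $\mathbf{I}\succeq\mathbf{L}_w$, justified by row-stochasticity of $\mathbf{L}_w$ --- precisely the ``non-expansive consensus map'' argument you sketch but do not execute.
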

\begin{proof}
The first step is to rewrite Eq.~\eqref{eq:CODKF_estimation} like Eq.~\eqref{eq:KF4}. The aggregated prediction $\bar{\mathbf{x}}^*_{i}$ in Eq.~\eqref{eq:CODKF_estimation} is obtained by
\begin{equation}\label{eq:xfusiones}
     \bar{\mathbf{x}}_{i}^* = \bar{\mathbf{P}}_{i}^* \sum_{j\in \mathcal{J}_i} \lambda_j^*\bar{\mathbf{P}}_j^{-1}\bar{\mathbf{x}}_{j}.
\end{equation}
This expression can be rewritten as
\begin{equation}\label{eq:xfusiones3}
     \bar{\mathbf{x}}_{i}^* = \bar{\mathbf{P}}_{i}^*\sum_{j\in \mathcal{J}_i} \lambda_j^*\bar{\mathbf{P}}_j^{-1}  \bar{\mathbf{x}}_{i}  +
     \bar{\mathbf{P}}^*_{i} \sum_{j\in \mathcal{N}_i} \lambda_j^*\bar{\mathbf{P}}_j^{-1}(\bar{\mathbf{x}}_{j}-\bar{\mathbf{x}}_{i}).
\end{equation}
From Lemma~\ref{lemma:convergence} we have that $\bar{\mathbf{P}}_{i}^*\sum_{j\in \mathcal{J}_i} \lambda_j^*\bar{\mathbf{P}}_j^{-1} = \mathbf{I}$. Thus, plugging~\eqref{eq:xfusiones3} into~\eqref{eq:CODKF_estimation} gives
\begin{equation}\label{eq:rewrite_sta}
\begin{aligned}
     \hat{\mathbf{x}}_{i} =& \bar{\mathbf{x}}_{i} + \mathbf{M}_i(\mathbf{y}_i - \mathbf{Y}_i\bar{\mathbf{x}}_{i})  + 
     \\&
     (\mathbf{I}-\mathbf{M}_i\mathbf{Y}_i)\bar{\mathbf{P}}_{i}^*
     \sum_{j\in \mathcal{N}_i} \lambda_j^*\bar{\mathbf{P}}_j^{-1}(\bar{\mathbf{x}}_{j}-\bar{\mathbf{x}}_{i}).
\end{aligned}
\end{equation}
The last expression is equivalent to Eq.~\eqref{eq:KF4} but weighting each term with the result of optimising~\eqref{eq:outerLJellipse}.
Given this equivalence, the noiseless dynamics of $\eta_i$ is
\begin{equation}\label{eq:error_dynamics}
     \eta_{i}^+ = (\mathbf{I}-\mathbf{M}_i\mathbf{Y}_i)(\mathbf{A}\eta_{i} + 
     \bar{\mathbf{P}}_{i}^*\sum_{j\in \mathcal{N}_i} \lambda_j^*\bar{\mathbf{P}}_j^{-1}\mathbf{A}(\eta_{j}-\eta_{i})),
\end{equation}
which can be written in compact form as $\eta^+ = \mathbf{F}\left.\mathbf{L}_w\right. \mathbf{\mathcal{A}}\eta.$ Here, $(\cdot)^+$ is the update operator, $\mathbf{F}$ is defined as in Lemma~\ref{lemma:matrices}, and $\mathbf{L}_w$ is such that $|\mathbf{L}_w|_{ij} = \bar{\mathbf{P}}_{i}^*\lambda_j^*\bar{\mathbf{P}}_j^{-1}$ for all $j\in \mathcal{J}_i$ and $\mathbf{0}$ otherwise. The latest is similar to the error dynamics proved as globally asymptotically stable in Lemma~\ref{lemma:stability}, but with $\mathbf{L}_w$ in between. Lemma~\ref{lemma:stability} states that $\Lambda = \mathbf{M}^{-1} - \mathbf{\mathcal{A}}^T\mathbf{G}^{-1} \mathbf{\mathcal{A}} \succ \mathbf{0}$
but, since the error dynamics is $\eta^+$, we instead have $\Lambda^{'} = \mathbf{M}^{-1} - \mathbf{\mathcal{A}}^T\mathbf{L}_w^T\mathbf{G}^{-1}\mathbf{L}_w\mathbf{\mathcal{A}}.$ If 
\begin{equation}\label{eq:statement1}
 \mathbf{\mathcal{A}}^T\mathbf{G}^{-1} \mathbf{\mathcal{A}} \succeq \mathbf{\mathcal{A}}^T\mathbf{L}_w^T\mathbf{G}^{-1}\mathbf{L}_w\mathbf{\mathcal{A}}, 
\end{equation}
then $\Lambda^{'} \succ \mathbf{0}$ and global asymptotic stability of the error dynamics is proved. The statement in~\eqref{eq:statement1} is equivalent to $\mathbf{I} \succeq \mathbf{L}_w$. Given Lemma~\ref{lemma:convergence}, $\mathbf{L}_w$ is a row-stochastic matrix. Therefore, by linear algebra results, the absolute value of any eigenvalue of $\mathbf{L}_w$ is less than or equal to $1$. This means that the eigenvalues of matrix $\mathbf{L}_w - \mathbf{I}$ are all negative or equal to $0$ (is negative semidefinite) and, then, $\mathbf{0} \succeq \mathbf{L}_w - \mathbf{I}$.

Thus, the statement $\mathbf{I} \succeq \mathbf{L}_w$ holds and $\Lambda^{'} \succ \mathbf{0}$.
\end{proof}

As a remark, notice that the stability result applies to CO-DKF independently of the certification. 
The next step is to demonstrate that CO-DKF is optimal when it is certified. 

\begin{theorem}[\textbf{Optimality}]\label{theorem:optimality}
Assume that the solution of~\eqref{eq:outerLJellipse} is certified as optimal. Then, it provides the optimal consensus gains of Eq.~\eqref{eq:KF4}, under the restrictions of Problem~\ref{problem:CO-DKF}.
\end{theorem}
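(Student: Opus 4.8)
The plan is to connect the objective of the optimisation~\eqref{eq:outerLJellipse} directly to the MSE and then identify the resulting fusion with the consensus update~\eqref{eq:KF4}. First I would start from Definition~\ref{definition:MSE} and the definition of $\mathbf{M}_i$ to write $\hbox{MSE}=\sum_{i=1}^N \hbox{Tr}(\mathbf{M}_i)$, so that minimising the global cost decouples into each node minimising its own contribution $\hbox{Tr}(\mathbf{M}_i)$. From~\eqref{eq:KF3}, $\mathbf{M}_i=(\bar{\mathbf{S}}_i^*+\mathbf{Y}_i)^{-1}$, where $\mathbf{Y}_i$ is fixed by the measurement aggregation~\eqref{eq:KF1} and is not a degree of freedom of the prediction fusion. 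Hence the only quantity to optimise is $\bar{\mathbf{S}}_i^*$, and by monotonicity of the trace of the inverse under the L\"owner ordering, enlarging $\bar{\mathbf{S}}_i^*$ in the positive semidefinite sense is exactly what decreases $\hbox{Tr}(\mathbf{M}_i)$.

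Next I would argue that the largest admissible $\bar{\mathbf{S}}_i^*$ under unknown correlations is the information matrix of the optimal enclosing ellipsoid of the intersection $\mathcal{F}_i$, i.e. the optimum of Problem~\ref{problem:intersection}, equivalently the non-relaxed program~\eqref{eq:traceProblem}. This is where the hypothesis enters: the solution of~\eqref{eq:outerLJellipse} is certified, so $\mathbb{C}_i=\rho_i=1$ and, by Proposition~\ref{proposition:certifiability}, the relaxed solution coincides with the optimum of~\eqref{eq:traceProblem}. Therefore the certified $\bar{\mathbf{S}}_i^*$ is the best achievable information matrix given the restrictions of Problem~\ref{problem:CO-DKF}, and by the previous step it yields the minimal $\hbox{Tr}(\mathbf{M}_i)$, hence the minimal MSE.

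Finally I would translate this optimal fusion into the gains of~\eqref{eq:KF4}. Reusing the algebraic rewriting already carried out in the stability proof, the update~\eqref{eq:CODKF_estimation} can be put in the form~\eqref{eq:rewrite_sta}, which is structurally identical to~\eqref{eq:KF4} except that the scalar consensus gain $\gamma\mathbf{M}_i$ is replaced by the per-neighbour matrix weights $(\mathbf{I}-\mathbf{M}_i\mathbf{Y}_i)\bar{\mathbf{P}}_i^*\lambda_j^*\bar{\mathbf{P}}_j^{-1}$, built entirely from the optimiser $\lambda_j^*$ and the covariances $\bar{\mathbf{P}}_i^*=(\bar{\mathbf{S}}_i^*)^{-1}$; Lemma~\ref{lemma:convergence} supplies the identity $\bar{\mathbf{P}}_i^*\sum_{j\in\mathcal{J}_i}\lambda_j^*\bar{\mathbf{P}}_j^{-1}=\mathbf{I}$ needed in that rewriting. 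Since these quantities come from the certified-optimal solution, the induced gains are precisely the consensus gains of~\eqref{eq:KF4} that minimise the MSE, which is the claim.

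The hard part will be the first step: justifying that maximising $\hbox{Tr}(\bar{\mathbf{S}})$ — rather than optimising the matrix directly in the L\"owner order — is the correct surrogate for minimising $\hbox{Tr}(\mathbf{M}_i)$. I would need to show that the certified optimiser is not merely trace-maximal but maximal in the positive semidefinite ordering over the feasible set of~\eqref{eq:constraint31}, so that the monotonicity argument for $(\bar{\mathbf{S}}_i^*+\mathbf{Y}_i)^{-1}$ applies; the tightness of the certified relaxation, i.e. equality in constraint~\eqref{eq:constraint31} via Lemma~\ref{lemma:convergence} together with $\rho_i=1$, is the lever for this. A secondary point to state carefully is that optimality is understood relative to the restricted information structure of Problem~\ref{problem:CO-DKF} (unknown correlations, one-hop communication, locality), so no comparison against the full-network covariance is being asserted.
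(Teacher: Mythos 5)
Your proposal follows essentially the same route as the paper's proof: identify $\hbox{MSE}=\sum_i\hbox{Tr}(\mathbf{M}_i)$ with $\mathbf{M}_i=(\bar{\mathbf{S}}_i^*+\mathbf{Y}_i)^{-1}$, invoke the certification to identify the relaxed optimiser with that of the non-relaxed problem~\eqref{eq:traceProblem}, and read off the consensus gains from the rewriting~\eqref{eq:rewrite_sta} of~\eqref{eq:CODKF_estimation} into the form of~\eqref{eq:KF4} using Lemma~\ref{lemma:convergence}. The ``hard part'' you flag --- that trace-maximality of $\bar{\mathbf{S}}_i^*$ over the feasible set is not the same as L\"owner-maximality, which is what the monotonicity argument for $\hbox{Tr}((\bar{\mathbf{S}}+\mathbf{Y}_i)^{-1})$ actually needs --- is a genuine subtlety that the paper's own proof passes over silently, so you are, if anything, more careful than the original.
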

\begin{proof}
The expression in Eq.~\eqref{eq:rewrite_sta} can be reformulated as
\begin{equation}\label{eq:rewrite}
     \hat{\mathbf{x}}_{i}^* = \bar{\mathbf{x}}_{i} + \mathbf{M}_i(\mathbf{y}_i - \mathbf{Y}_i\bar{\mathbf{x}}_{i})  + 
     \gamma\mathbf{M}_i\sum_{j\in \mathcal{N}_i} \lambda_j^*\bar{\mathbf{P}}_j^{-1}(\bar{\mathbf{x}}_{j}-\bar{\mathbf{x}}_{i})
\end{equation}
with $\gamma = (\mathbf{I}-\mathbf{M}_i\mathbf{Y}_i)\bar{\mathbf{P}}_{i}^*\mathbf{M}_i^{-1}$. The latest is equivalent to Eq.~\eqref{eq:KF4} but weighting each term with the result of optimising~\eqref{eq:outerLJellipse}. Recalling Definition~\ref{definition:MSE}, the MSE is $\sum_i^{N}\hbox{E}[||\hat{\mathbf{x}}_i-\mathbf{x}||^2]$, which can be rewritten as $\sum_i^{N}\hbox{Tr}(\mathbf{M}_i)$. Since $\mathbf{M}_i = (\bar{\mathbf{S}}_{i}^{*} + \mathbf{Y}_i)^{-1}$ and $\bar{\mathbf{S}}_{i}^{*}$ comes from the optimisation in~\eqref{eq:outerLJellipse} which optimises over the trace of $\bar{\mathbf{S}}_{i}$, and this solution is certified as optimal, then the solution of~\eqref{eq:outerLJellipse} is equivalent to finding the optimal consensus gains of Eq.~\eqref{eq:KF4}.
\end{proof}

This result means that the optimisation carried on to fuse the neighbouring predictions is equivalent to finding the optimal consensus gains in the MSE sense. Thus, if the node certifies its optimality at instant $k$, then the estimation at instant $k$ is optimal. 

It is interesting to note the differences of CO-DKF with~\cite{wang2017convergence,he2018consistent}. Despite taking a similar approach for the fusion of estimates, they both omit, implicitly or explicitly, the consensus over estimates and/or measurements. Moreover, the optimisation problem for the fusion of estimates is the common CI (the adaptive CI in~\cite{he2018consistent} is another optimisation problem over the original one). These two aspects are key differential because (i) our reformulation allows to certify optimality in the weights' election and therefore, (ii) in conjunction with the  explicit consensus over measurements and implicit consensus over estimates, we can proof optimality.


\section{Simulations}\label{sec:examples}

To validate CO-DKF we compare the performance of different estimators: (i) \textit{CO-DKF}, our proposal, (ii) \textit{CDFK}, Algorithm 3 from~\cite{Olfati2007DKF} with consensus gain $\gamma_i = 1/(||\mathbf{M}_i||_{F} + 1)$ and $||\cdot||_F$ the Frobenius norm, (iii) \textit{HDfKF}, complete algorithm in~\cite{Hu2011Diffusion}, (iv) \textit{HADfKF}, simplified algorithm in~\cite{Hu2011Diffusion}, (v) \textit{CKF}, centralised equivalent KF. The target system is a 2D particle described by 
\begin{align}
    \mathbf{x} &= \begin{pmatrix}
    x & y & v_x & v_y
    \end{pmatrix}^T 
    \\
    \mathbf{A} &= 
    \begin{pmatrix}
    1 & 0 & \sin{w_p T} &  \cos{w_p T} - 1
    \\ 
    0 & 1 & 1-\cos{w_p f T} & \sin{w_p T}
    \\ 
    0 & 0 & \cos{w_p T} & \sin{w_p T} 
    \\ 
    0 & 0 & \sin{w_p T}  & \cos{w_p T}
    \end{pmatrix} 
\end{align}
with $w_p=0.5$rad/s, $T=0.1$s and $\mathbf{Q}=2\times 10^{-6}\mathbf{I}_4$. 

To do the comparison, we analyse two scenarios. Experiment $1$ instantiates a random Laplacian to produce a random sensor network, with appropriate parameters to obtain a sparse connected topology. Then, we uniformly randomly decide the quantities each sensors measures, selecting among measuring $x$, measuring $y$ or measuring $x$ and $y$. Notice that we never measure $v_x$ nor $v_y$. Besides, the values of each $\mathbf{H}_i$ come from a uniform distribution in the range $[1,3]$. After that, a $p=0.5$ Bernoulli distribution decides if the diagonal values of $\mathbf{R}_i$ are in the range $[3,5]\times 10^{-2}$ or $[3,5]$, i.e., if the sensor has high or low quality. This process is done 100 times, computing the MSE and doing the average over the experiments, which is the same procedure in~\cite{Cattivelli2010Diffusion} and~\cite{Hu2011Diffusion}. Experiment $2$ is more challenging and aims at validating the robustness and performance of CO-DKF. We repeat the initialisation process of Experiment $1$ but only one sensor is of high quality. An instance is shown in Fig.~\ref{fig:topology}.
\begin{figure}[!ht]
\centering
\includegraphics[width=0.8\columnwidth]{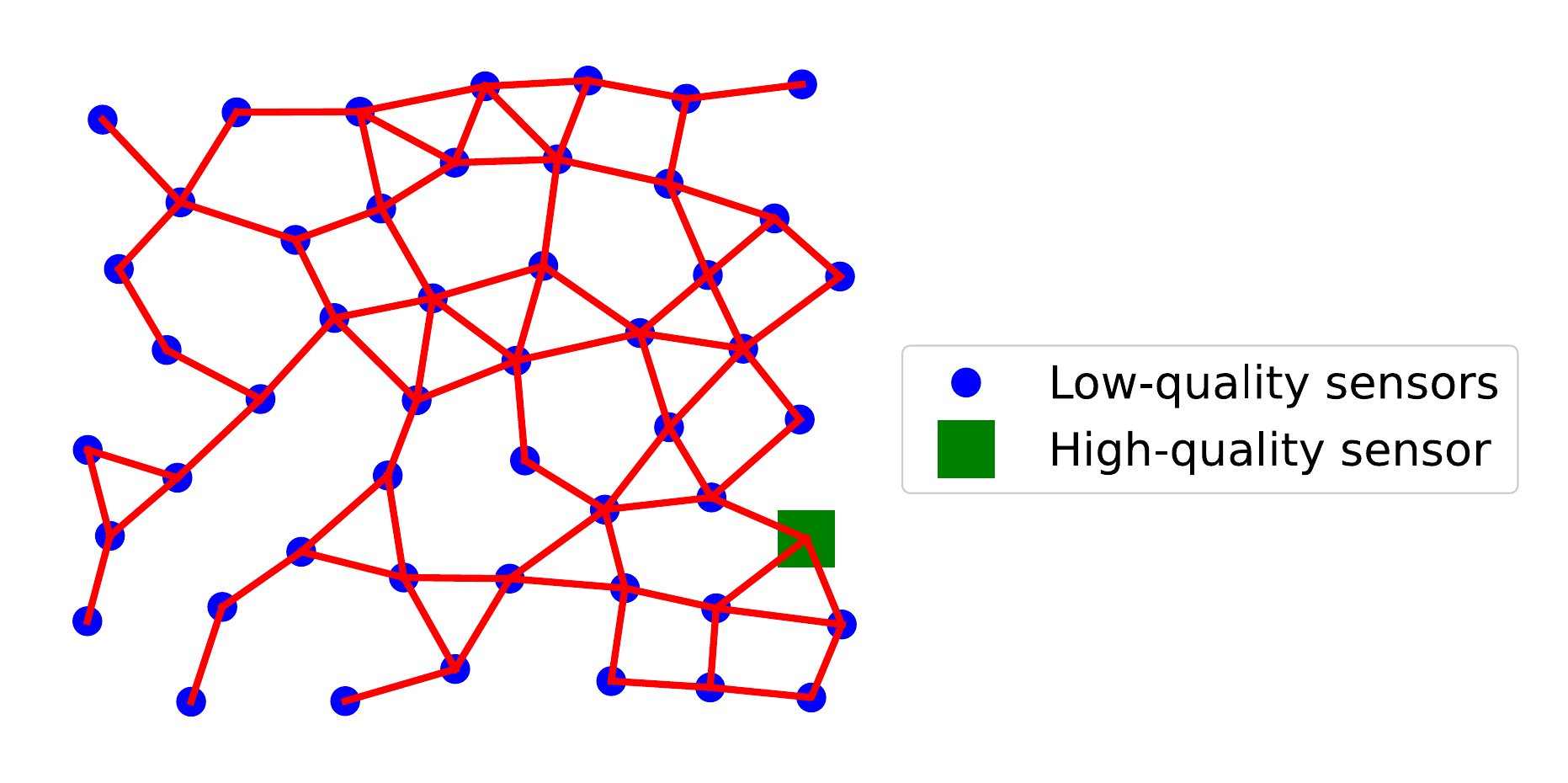}
\caption{Example of random sensor network generated in Experiment $2$.}
\label{fig:topology}
\end{figure}

In the case of the CDKF, we have computed the average over the successful experiments, which is a $42\%$ in Experiment $1$ and a $15\%$ in Experiment $2$. This is reasonable since the CDKF needs a sufficiently small consensus gain to be stable~\cite{Olfati2009DKF}, subject to the quality and density of the measurements. For the rest of the estimators, all the experiments converged. 

The results of Experiment $1$ are shown in the first column of Fig.~\ref{fig:RMSEvsREAL_transient}. The best performance is obtained by CO-DKF, with a difference of an order of magnitude with HDfKF and HADfKF, and of several orders of magnitude with CDKF. Indeed, the trace of the CDKF only appears in the zoomed pannel in Fig.~\ref{fig:RMSEvsREAL_transient}. Focusing near the steady-state, we can see that CO-DKF is also the best, with the fastest convergence to the CKF and with a difference of an order of magnitude with HDfKF/HADfKF. The CDKF improves its performance, without surpassing CO-DKF, but we recall that it corresponds to a $42\%$ of success rate. 

The differences are greater in Experiment $2$, as it is shown in the second column of Fig.~\ref{fig:RMSEvsREAL_transient}, where the HADfKF diverges and HDfKF exhibits a much worse transient response. Therefore, CO-DKF obtains the best performance, coming with certifiability, optimality and stability guarantees, and with no need of tuning even with only a single high-quality sensor.

\begin{figure}[!ht]
\centering
\begin{tabular}{cc}
    {\footnotesize Experiment 1}
    &
    {\footnotesize Experiment 2}
    \\
    \includegraphics[width=0.45\columnwidth,height=0.4\columnwidth]{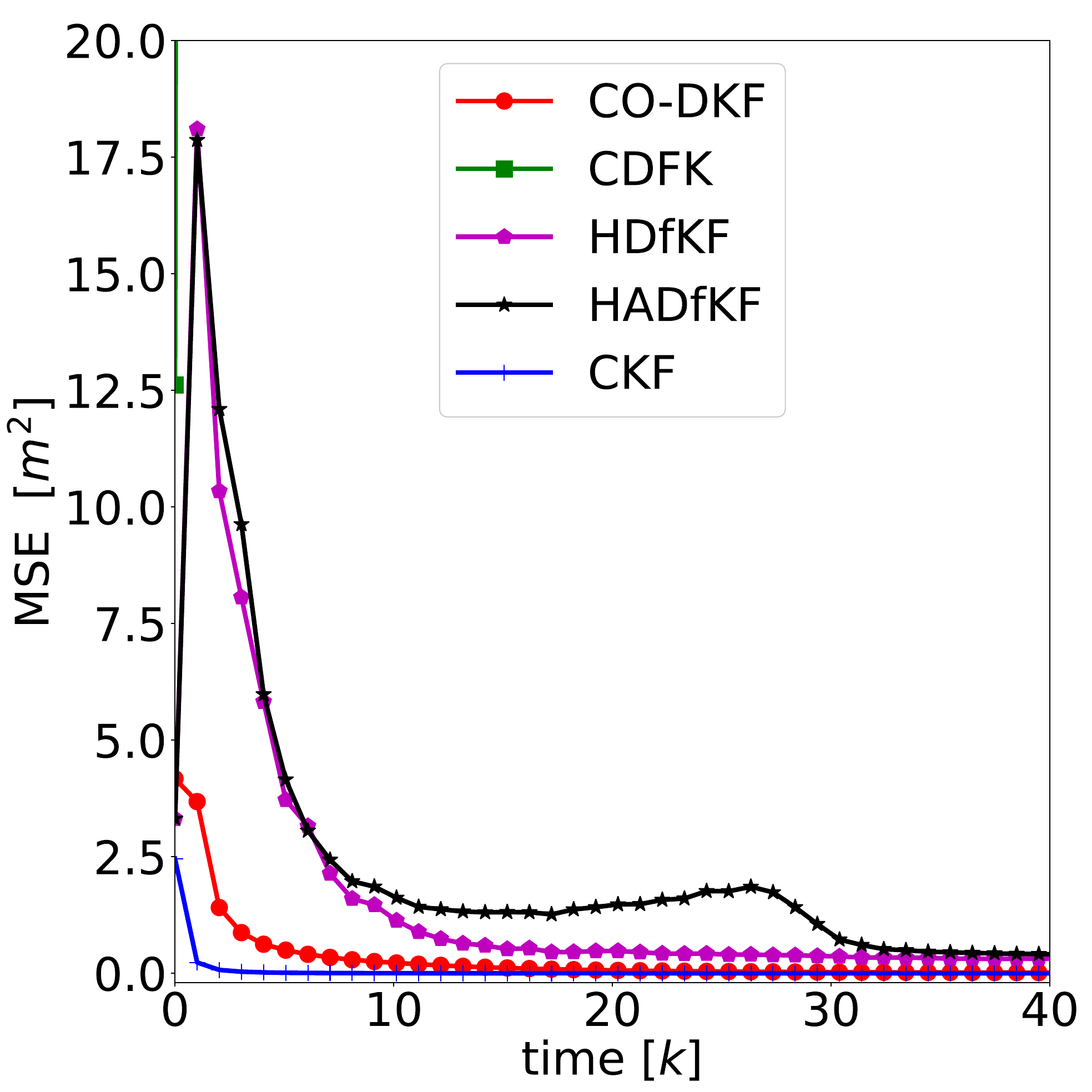}
    &
    \includegraphics[width=0.45\columnwidth,height=0.4\columnwidth]{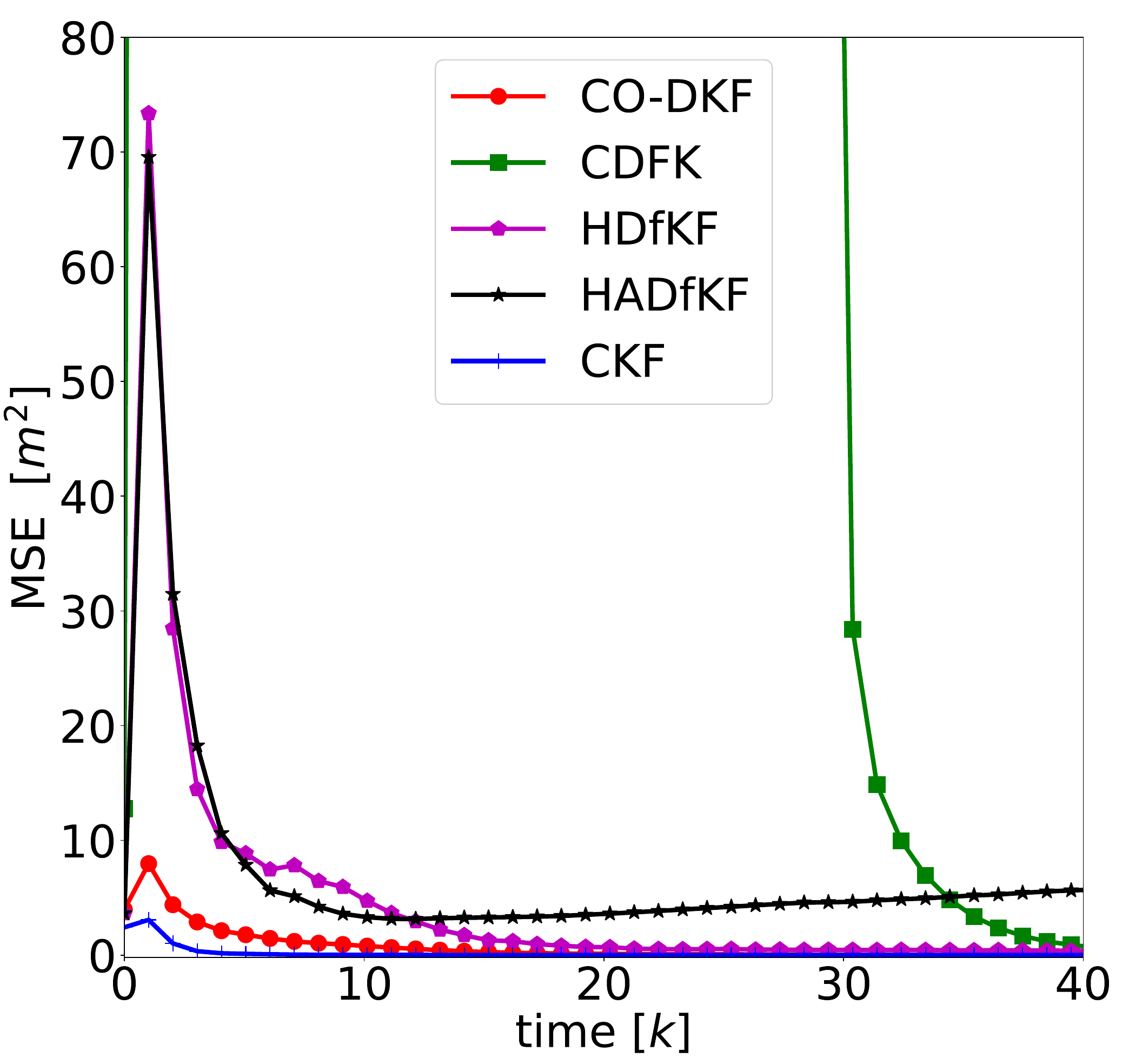}
    \\
    \includegraphics[width=0.45\columnwidth,height=0.4\columnwidth]{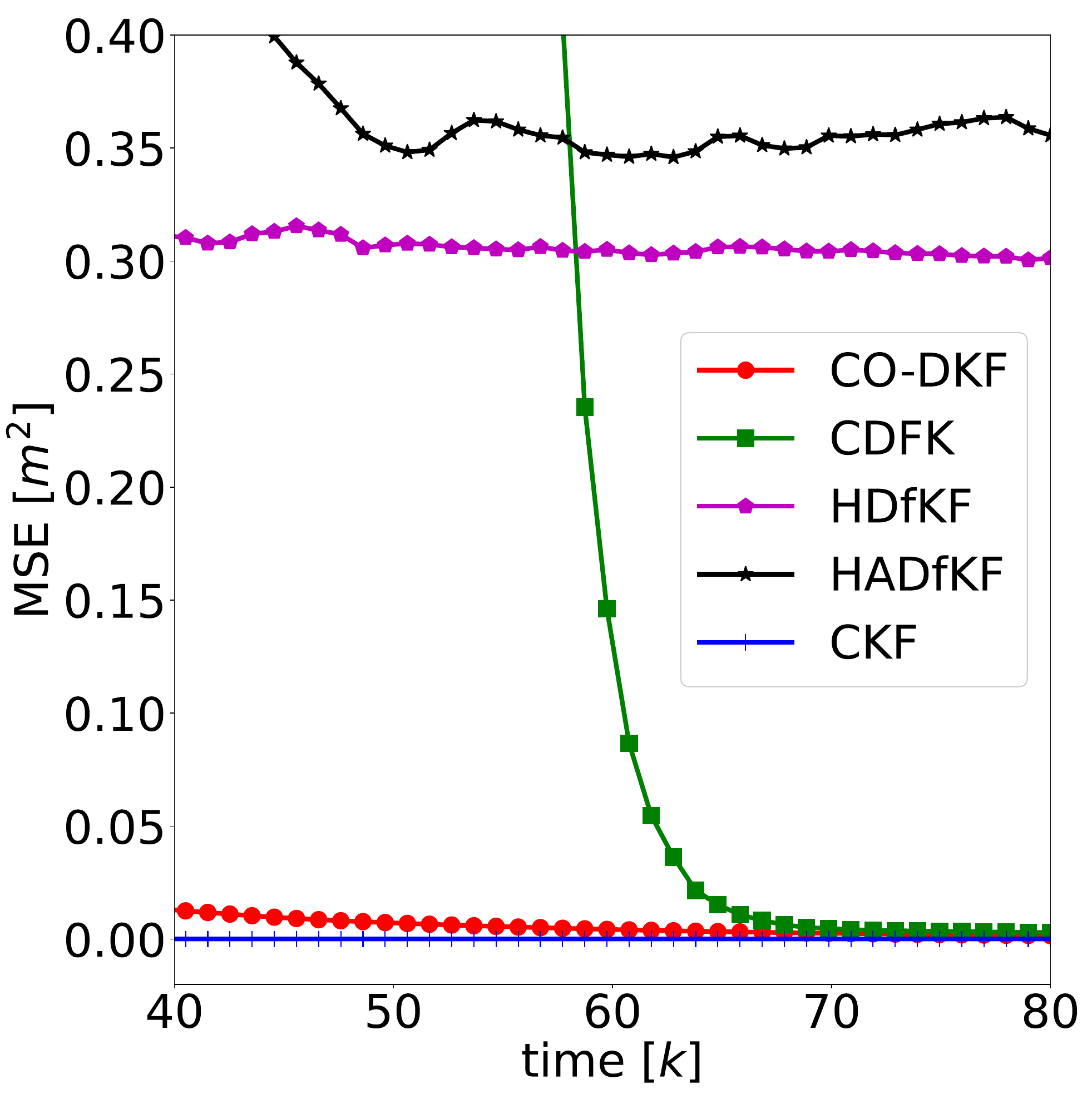}
    &
    \includegraphics[width=0.45\columnwidth,height=0.4\columnwidth]{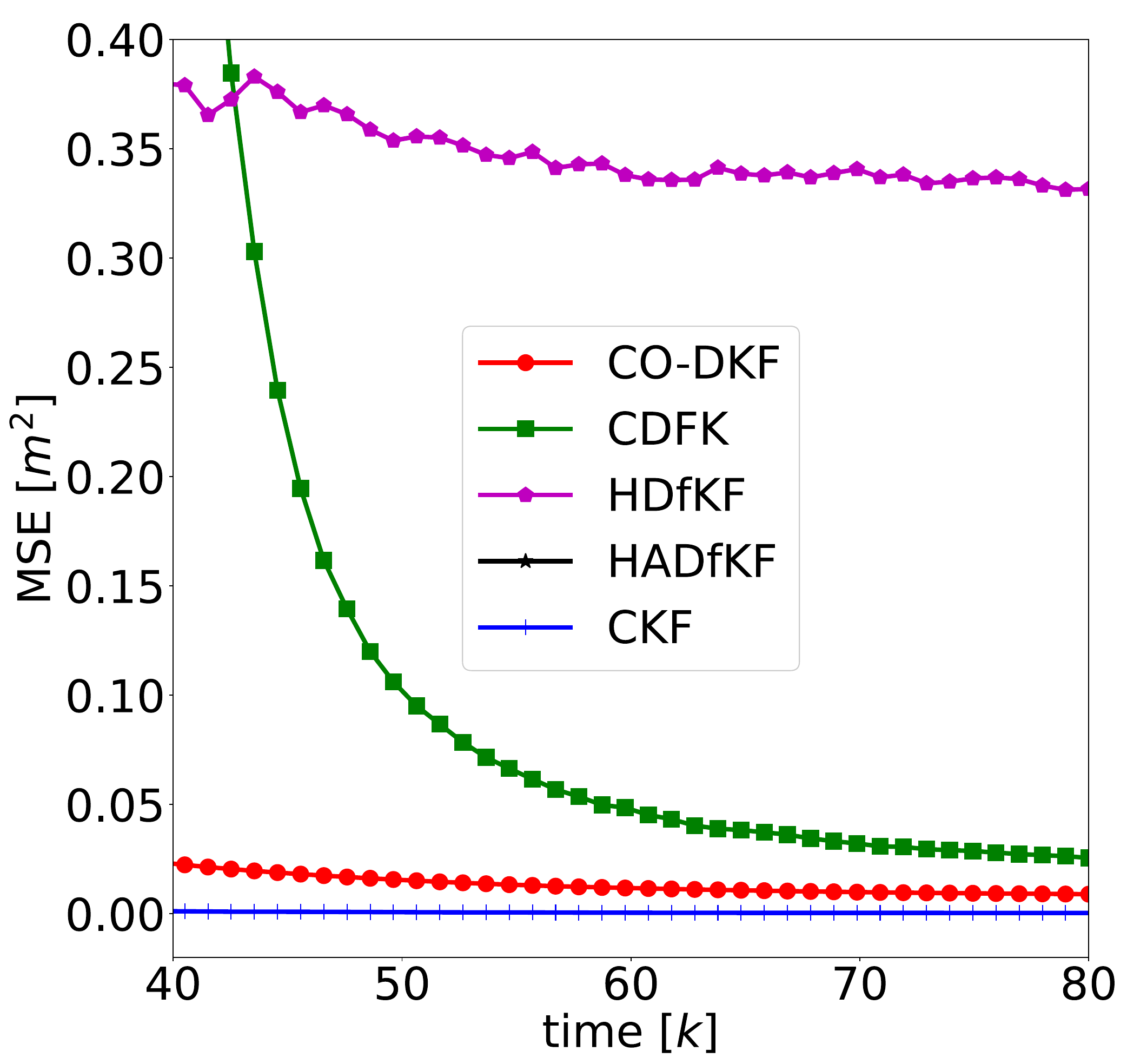}
    \\
    \includegraphics[width=0.45\columnwidth,height=0.4\columnwidth]{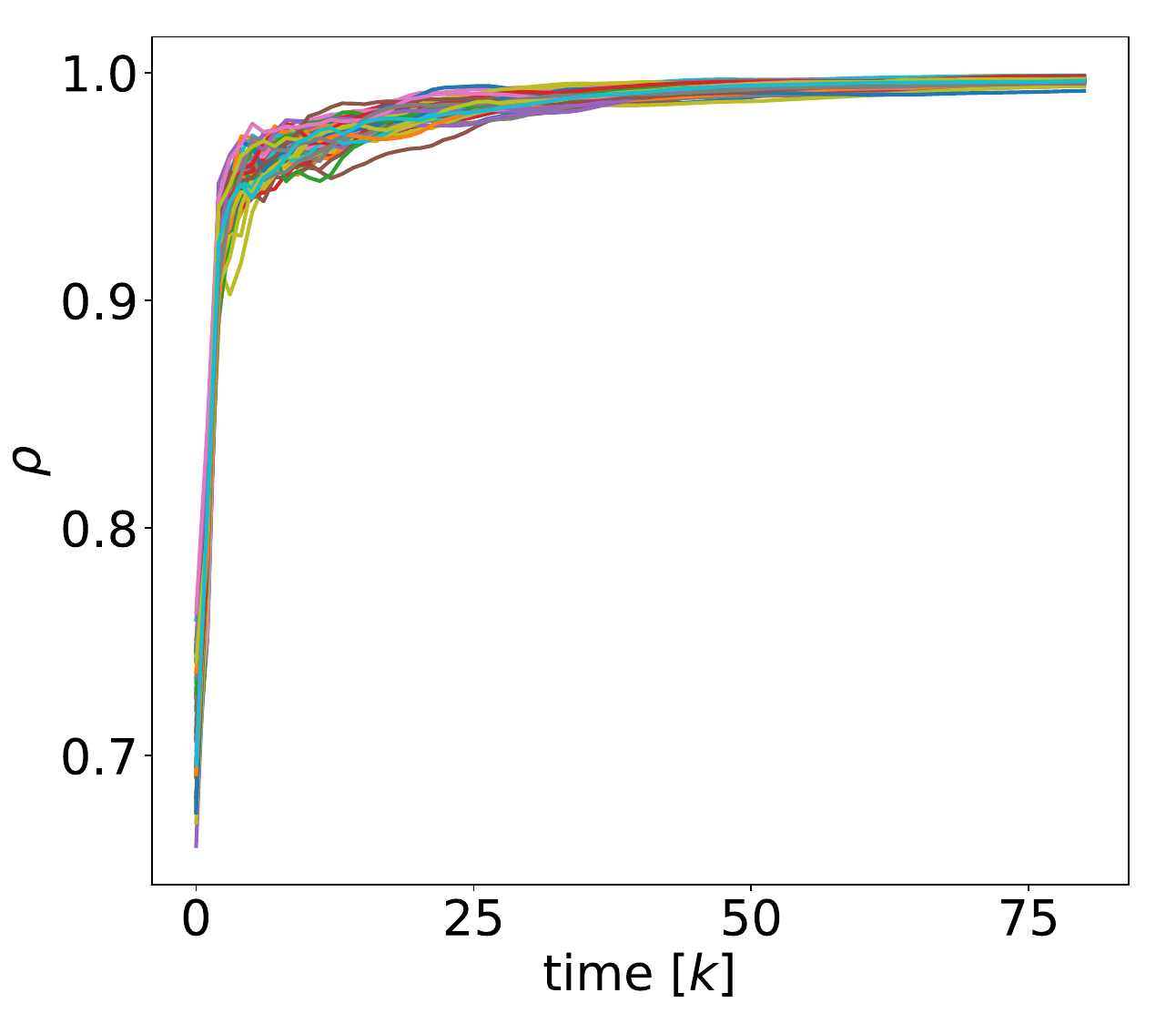}
    &
    \includegraphics[width=0.45\columnwidth,height=0.4\columnwidth]{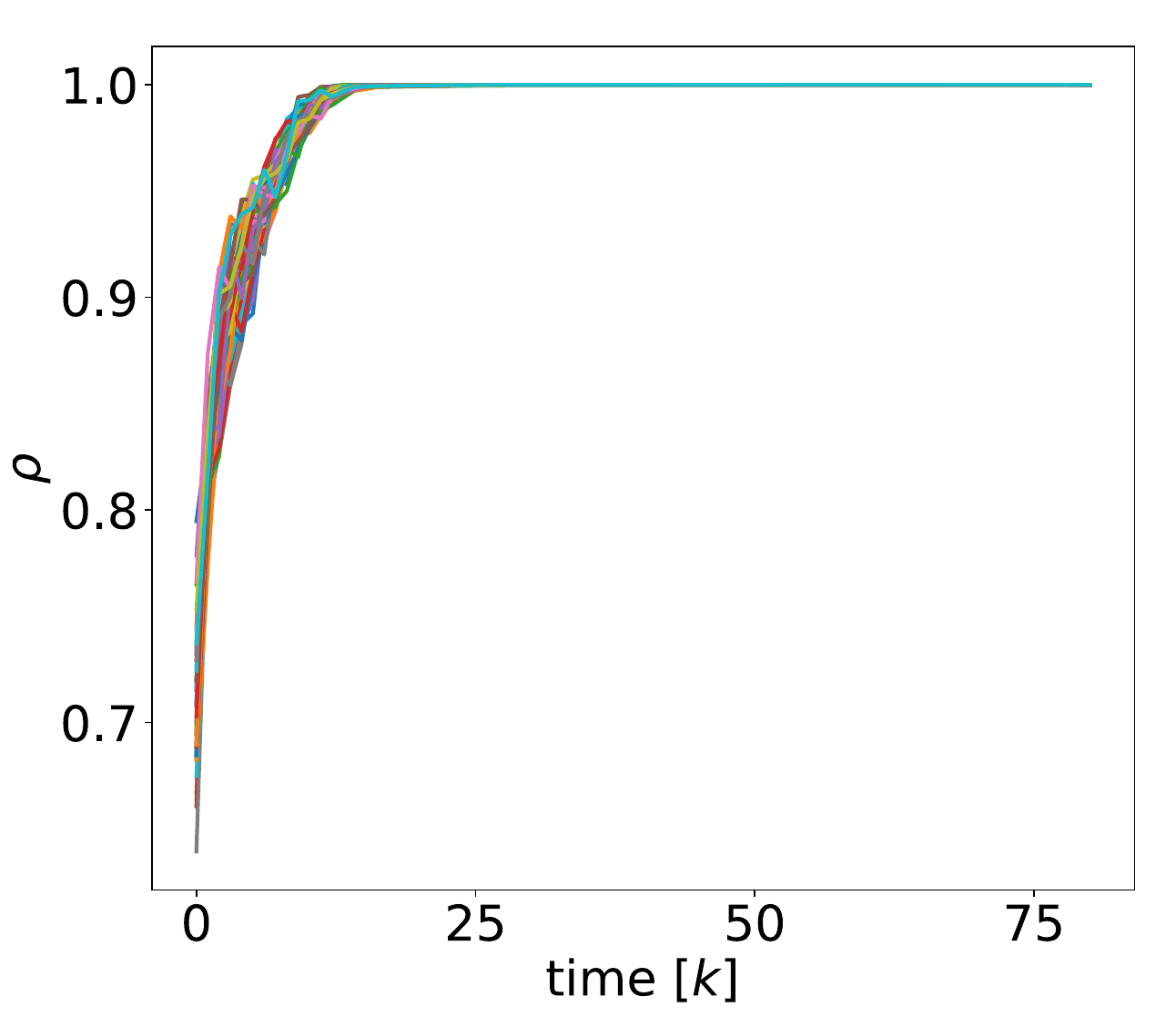}
\end{tabular}
\caption{Simulation results. First row depicts the MSE for the different estimators during the transient. Second row is a zoom of the previous in the proximity of steady-state. The third row shows the certification results, plotting the ratio achieved by the sensors over time and assigning an arbitrary colour to each sensor. CO-DKF is successful in certifying the estimation in a: (left) $96.062\%$, (right)  $93.473\%$.}
\label{fig:RMSEvsREAL_transient}
\end{figure}

The last row of Fig.~\ref{fig:RMSEvsREAL_transient} shows the value of $\rho$ in both Experiments. We see that the solution of~\eqref{eq:outerLJellipse} rapidly converges to $1$. Since in Experiment $1$ there are more high-quality sensors, the transient converges quicker than in Experiment $2$. The lower values of $\rho$ are at $0.65$, and this is only in the initialisation. We have also calculated the percentage of times that the nodes can certificate Problem~\ref{problem:intersection}, i.e.,  Constraint~\eqref{eq:constraint43} holds. This is a $96.062\%$ in Experiment $1$ and $93.473\%$ in Experiment $2$, so, along with the rapid convergence towards $\rho=1$, we can assure that our SDP relaxation is \textit{tight} and therefore CO-DKF optimality is tight.

The computation of~\eqref{eq:outerLJellipse} and~\eqref{eq:traceRelax} takes about $50$ms using CVXPY~\cite{cvxpy}, MOSEK~\cite{mosek} and an standard laptop, which can be speed up by an ad hoc implementation and consistent with $T=100$ms.


\section{Conclusions}\label{sec:conclusion}

This paper has presented CO-DKF, the first certifiable optimal distributed Kalman Filter under unknown correlations. The use of the outer Löwner-John ellipsoid method allows us to fusion the neighbouring estimates with certifiable guarantees of optimality, which comes from a tight relaxation of the original SDP problem. The optimisation is integrated in an information DKF, achieving a robust CO-DKF over heterogeneous sensor networks, with minimal communication burden and no tuning. We have also proved global asymptotic stability of the estimator and the equivalence between CO-DKF and a consensus DKF with optimal gains in the MSE sense. Two experiments have validated all the aforementioned features, returning remarkable results in sparse, highly noisy scenarios.

\balance


\bibliographystyle{IEEEtran}
\bibliography{biblio}

\end{document}